\documentclass[journal]{IEEEtran}
\usepackage{url}

\usepackage{textcomp}
\usepackage{xcolor}
\usepackage{hyperref}
\usepackage{booktabs}	
\usepackage{diagbox}    
\usepackage{multirow}  
\usepackage{verbatim}	
\usepackage{amsmath,amssymb,amsfonts,graphicx}
\usepackage{epstopdf}

\newtheorem{definition}{Definition}[section]

\newtheorem{strategy}{Strategy}    
\newenvironment{proof}{{\noindent\it\bf Proof}\quad}{\par}
\newtheorem{lemma}{Lemma}  
\newtheorem{theorem}{Theorem}   

\newtheorem{technique}{Technique}
\newtheorem{upper bound}{Upper bound}

\usepackage{graphicx,epstopdf,algpseudocode,caption,url}   
\usepackage[ruled,linesnumbered]{algorithm2e}
\usepackage{amssymb}
\usepackage{hyperref}
\usepackage[switch]{lineno}

\begin{document}

\title{Guided Exploration of Sequential Rules}

\author{Wensheng Gan*, Gengsen Huang, Junyu Ren, Philip S. Yu,~\IEEEmembership{Life Fellow,~IEEE}

\thanks{This research was supported in part by the National Natural Science Foundation of China (No. 62272196), Guangzhou Basic and Applied Basic Research Foundation (No. 2024A04J9971). Wensheng Gan and Gengsen Huang contributed equally to this work.}

\thanks{Wensheng Gan and Junyu Ren are with the College of Cyber Security, Jinan University, Guangzhou 510632, China. (E-mail: wsgan001@gmail.com, renjunyu193@gmail.com)}

\thanks{Gengsen Huang is with the School of Artificial Intelligence, Wuhan University, Wuhan 430000, China. (E-mail: hgengsen@gmail.com)} 
	
\thanks{Philip S. Yu is with the Department of Computer Science, University of Illinois Chicago, Chicago, USA. (E-mail: psyu@uic.edu)} 

\thanks{Corresponding author: Wensheng Gan}
}

% make the title area
\maketitle

\begin{abstract}
    In pattern mining, sequential rules provide a formal framework to capture the temporal relationships and inferential dependencies between items. However, the discovery process is computationally intensive. To obtain mining results efficiently and flexibly, many methods have been proposed that rely on specific evaluation metrics (i.e., ensuring results meet minimum threshold requirements). A key issue with these methods, however, is that they generate many sequential rules that are irrelevant to users. Such rules not only incur additional computational overhead but also complicate downstream analysis. In this paper, we investigate how to efficiently discover user-centric sequential rules. The original database is first processed to determine whether a target query rule is present. To prune unpromising items and avoid unnecessary expansions, we design tight and generalizable upper bounds. We introduce a novel method for efficiently generating target sequential rules using the proposed techniques and pruning strategies. In addition, we propose the corresponding mining algorithms for two common evaluation metrics: frequency and utility. We also design two rule similarity metrics to help discover the most relevant sequential rules. Extensive experiments demonstrate that our algorithms outperform state-of-the-art approaches in terms of runtime and memory usage, while discovering a concise set of sequential rules under flexible similarity settings. Targeted sequential rule search can handle sequence data with personalized features and achieve pattern discovery. The proposed solution addresses several challenges and can be applied to two common mining tasks. 
\end{abstract}

\begin{IEEEkeywords}
  pattern mining, sequential rule, target constraint, similarity metric
\end{IEEEkeywords}

\IEEEpeerreviewmaketitle

\section{Introduction}  \label{sec:introduction}

Pattern mining \cite{fournier2022pattern} has been a fundamental technique in data mining for years. The primary objective is to extract high-value patterns from a database. These patterns are evaluated using metrics like frequency \cite{wu2024co}, utility \cite{gan2019survey}, and weight \cite{yun2005wfim}. Furthermore, the results can be represented in different formats, such as itemsets \cite{fournier2017survey}, sequences \cite{fournier2017surveyspm}, rules \cite{wu2023opr}, or subgraphs \cite{chen2025utility}, depending on user requirements. Frequency is one of the most common metrics to assess quality; typically, the higher the frequency of a pattern, the greater its significance. However, in cases like market analysis, frequency is not always the optimal choice, as it ignores the varying importance of items. Relying solely on frequency may lead to the omission of critical insights. For example, in a retail setting, bread exhibits high sales volume but yields significantly lower profit than a diamond necklace. Since diamonds are sold infrequently, traditional mining methods often miss the patterns related to them. This is where the more sophisticated utility metric \cite{gan2019survey, gan2018privacy} proves effective.  % Another real-world example is a high-end electronics store: smartphone screen protectors are sold hundreds of times daily (high frequency) with a profit margin of only \5%, while professional cameras are sold 2-3 times weekly (low frequency) but with a \40% profit margin. If the store only uses frequency mining, it will focus on promoting screen protectors and overlook the high-utility pattern of ``professional photographers purchasing cameras with complementary lenses," which could bring much higher total profits. In the e-commerce field, platforms like Amazon also face this issue—low-cost daily necessities (e.g., toothbrushes) have high purchase frequency, but high-value products (e.g., laptops) have low frequency but high utility; utility mining helps the platform recommend high-utility product combinations to improve customer lifetime value.

Utility is a more flexible metric than frequency. In frequent pattern mining \cite{gao2023toward}, items are typically treated as binary entities, where their value simply reflects their presence or absence. In contrast, the utility metric assigns distinct values to items across different records \cite{gan2019survey}. This capability prevents the omission of low-frequency yet high-profit patterns. Furthermore, other metrics have been proposed to address specific practical challenges. For example, weight assigns appropriate importance to items for a better understanding of the data. In parallel, utility occupancy \cite{gan2019huopm} evaluates patterns comprehensively by combining frequency, utility, and occupancy. For instance, in healthcare data mining, when analyzing patient medication adherence, ``life-saving drugs" (e.g., insulin for diabetics) are assigned a higher weight than ``auxiliary health supplements" (e.g., vitamin tablets), even if the supplements are taken more frequently. This weighted mining helps medical staff focus on critical medication patterns that affect patient survival. In education data mining, student exam scores can be weighted—final exams (weight 0.6) have higher importance than midterm exams (weight 0.3) and homework (weight 0.1), allowing educators to identify patterns between learning behaviors and final academic performance more accurately.

Apart from using a suitable evaluation metric to enhance pattern quality, the choice of pattern is also crucial \cite{fournier2022pattern,li2024opf,lan2025tk}. The pattern format needs to be selected appropriately depending on the database and the application scenario being processed. In real-world applications, although dealing with complex formats may require more time, the generated results can potentially hold greater value. For example, when analyzing a disease using medical data that record various tests conducted on patients, itemsets are well-suited to identifying significant combinations of tests. However, if the data also includes the order in which the tests were conducted, then sequences with temporal information are more appropriate than itemsets. If the application requires prediction, association rules and sequential rules are often preferred choices. Each rule possesses a confidence value that indicates the probability of the antecedent implying the consequent. In biological data, non-overlapping sequences can provide a more realistic analysis. Generally, the proper pattern format can bring effective functions to an application, thus improving the value and reliability of mining tasks \cite{fournier2022pattern}. Although the user can choose a more appropriate method for different databases and application scenarios, not all mining results produced are necessarily useful. The results obtained are high-value patterns in terms of metric assessment, but some patterns may be highly redundant if they do not contain any content that is of interest to the user. For example, mining server port traffic monitoring data can assess the status of a particular port. If analysts focus on whether the default port of the server's HTTP protocol is being attacked, then patterns unrelated to it do not provide useful and significant information. The generation of irrelevant patterns only costs more computational resources. Especially when using complex evaluation metrics, pattern mining tasks also consume more time and memory. Targeted pattern mining (TaPM) \cite{miao2021targetum, hu2024targeted, chen2025toward} places an inclusion relation constraint on pattern generation by considering user preferences so that only patterns that satisfy the user's requirements are generated. For example, in the financial institution’s server monitoring scenario, TaPM can be set to only generate patterns related to port 443, reducing redundant data by 60\% and shortening attack detection time by 40\%. In CRM, TaPM can focus on constraints like ``annual spending $>$ \$10,000 + purchase frequency $>$ 5 times/year," directly generating patterns such as ``high-value customers tend to purchase premium services in the fourth quarter", which helps the company formulate targeted retention promotions. In medical research, if researchers focus on mining patterns of lung cancer in smokers, TaPM can set constraints like ``smoking history $>$ 10 years $+$ lung-related symptoms", avoiding irrelevant patterns of non-smokers or other diseases and accelerating research progress. In general, a sub-search that cannot obtain interesting patterns during a complete mining process can be terminated early.

% introduce our work
In this paper, considering that sequential rules contain extensive critical information, we focus on their targeted search problems. Sequential rules are a variant of association rules, which are used for sequence databases with chronological information. For a sequential rule that contains both an antecedent and a consequent, it indicates that the content of the antecedent must occur before the content of the consequent. Additionally, numerous irrelevant sequential rules are often generated in a specialized data mining task. Therefore, the targeted acquisition of results is an important research topic. To solve this problem, we propose efficient techniques and pruning strategies for processing sequences in the database and design several upper bounds on the evaluation metrics. In this paper, the task of mining sequential rules using frequency and utility metrics is studied. Moreover, we explore the problem of evaluating the importance of sequential rules in the mining results. Our proposed solution can provide inference and predictive analytics functions within an acceptable time frame and contribute to various real-world applications, such as online retail recommendation, network intrusion detection, intelligent transportation analysis, risk evaluation, and more. The main contributions of our paper are as follows.

\begin{itemize}
    \item The paper mainly addresses the targeted search problem for sequential rules under two common metrics. Techniques for sequence removal and modification are first introduced to reduce the database's size. Then, a generic mining approach is proposed.
    
    \item To remove unpromising items from the database, several upper bounds are proposed, and their proofs are provided. Based on these, we also design related pruning strategies to avoid unnecessary expansion operations.
    
    \item We propose two target rule similarity metrics for the obtained sequential rules, enabling users to select the most important target sequential rules relevant to the specialized query rule.
    
    \item Extensive experiments are conducted on both real and synthetic datasets to evaluate the effectiveness and efficiency of the proposed solutions.
\end{itemize}

The remaining parts of this paper are given as follows. Related work is first reviewed in Section \ref{sec:relatedwork}. Preliminaries and basics are described in Section \ref{sec:definition}. The proposed techniques, upper bounds, and pruning strategies are presented in detail in Section \ref{sec:algorithm}. The detailed experimental results and analysis are given in Section \ref{sec:experiments}. Finally, the conclusion and future work are presented in Section \ref{sec:conclusion}.

\section{Related Work} \label{sec:relatedwork}
\subsection{Interesting Pattern Mining}

Pattern mining is an important subfield of data mining that aims to discover collections of objects that are rich in valuable information \cite{fournier2022pattern}. Early mining tasks were proposed to discover frequent association rules and itemsets. Association rules employ the ``if-then'' rule form to identify co-occurrence patterns in databases and are widely used in many real-world applications, such as disease diagnosis, market basket analysis, and malware detection. A frequent itemset refers to a set of items that frequently appear together in a database, which can reveal the relationships between different items \cite{han2000mining}. There are many algorithms for frequent itemset mining, such as the bottom-up algorithm Apriori \cite{agrawal1994fast}, the divide-and-conquer algorithm FP-Growth \cite{han2000mining}, and the equivalence-class-based algorithm Eclat \cite{zaki2000scalable}. However, association rules and itemsets can only reveal co-occurrence information and may be insufficient in some scenarios. Moreover, traditional algorithms are designed to handle transactional data and are unable to effectively process complex data. Therefore, more specific and diverse pattern formats \cite{yun2005wfim}, as well as corresponding algorithms, have been proposed. For example, PrefixSpan \cite{pei2004mining} is designed to discover frequent sequences from temporal data, and an episode mining method \cite{zhu2010efficient} was proposed to discover episodes with minimal and non-overlapping occurrences. To address the issue that the frequency metric neglects the importance of items, the utility metric \cite{gan2019survey, chen2024towards, zhou2025discovering} was introduced. The calculation of utility varies for different mining tasks. In the e-commerce field, low-cost daily necessities (e.g., toothbrushes) have high purchase frequency, but high-value products (e.g., laptops) have low frequency but high utility; utility mining helps recommend high-utility product combinations. In transaction data, the utility of an itemset in a record is often calculated using summation since the items appear only once. Considering that items in the sequence records of temporal data occur multiple times, the utility of a sequence in a record is then calculated using the maximization method. There are many efficient mining algorithms for utility-driven pattern mining. HUI-Miner \cite{liu2012mining} uses the idea of merging utility-lists to obtain high-utility itemsets \cite{gui2024privacy}. Based on the design of tighter upper bounds and utility-linked-lists, HUSP-ULL \cite{gan2020fast} and TMPHP \cite{zhou2025targeted} can discover high-utility sequences quickly.

\subsection{Sequential Rule Mining}

The sequential rules obtained from sequence data are a variant of association rules. The temporal information they carry provides the ability to focus on the order in which items occur. Generally, a sequential rule can be denoted by $X \rightarrow Y$, where $X$ and $Y$ are the antecedent and consequent of the rule, respectively. Unlike sequential patterns, sequential rules have their own rule characteristics that can indicate the confidence level of the antecedent leading to the consequent. Sequential rules can be categorized into two types based on whether the items in the antecedent and consequent of the rule are ordered. These two types are commonly referred to as totally-ordered \cite{pham2014efficient} and partially-ordered \cite{fournier2015mining, huang2023us} sequential rules. The antecedent and consequent in a totally-ordered sequential rule are sequential patterns, indicating that all items in the rule have strict chronological order restrictions. As for a partially-ordered rule, its antecedent and consequent items can be considered as two unordered sets. In terms of the temporal order of occurrence, antecedent items always appear before consequent items. There are many algorithms for sequential rule mining. The concept of sequential rules was first introduced by Zaki \cite{zaki2001spade}, and a very naive method was also proposed. This method generates totally-ordered sequential rules by generating and matching long and short sequences, and thus it has a very high time complexity. Later, a two-phase-based CMRule algorithm \cite{fournier2012cmrules} first discovers association rules and then selects the rules that satisfy the definition of sequential rules from the results. Subsequently, the RuleGrowth algorithm \cite{fournier2015mining} was proposed, which adopted the concept of rule expansion and growth. RuleGrowth can avoid generating sequential rules that do not appear in the database, and it is also capable of working with multiple-item sequences. To obtain rules quickly, an efficient equivalence class merging algorithm, namely ERMiner \cite{fournier2014erminer}, utilizes a data structure that records item-pair information to reduce the generation of invalid rules. Furthermore, some algorithms are designed to acquire more valuable and compact results. The above-mentioned methods are all based on the frequency metric. In addition to these, some works are based on the utility metric. HUSRM \cite{zida2015efficient} is the first algorithm to discover high-utility sequential rules. HUSRM uses many optimization techniques to improve algorithm performance, including utility upper bound design, irrelevant item filtering, and the use of bitmaps. Huang \textit{et al.} \cite{huang2023us} improved the utility upper bounds in HUSRM, designed a utility information table, and proposed a more efficient algorithm called US-Rule. 

\subsection{Targeted Pattern Mining}

Many results generated in pattern mining are not of interest to users. Targeted pattern mining (TaPM) \cite{chen2024towards} can be customized by users to obtain meaningful query results with higher efficiency and less resource usage, offering personalized features \cite{huang2020embedding}. Kubat \textit{et al.} \cite{kubat2003itemset} first proposed a data structure called the Itemset tree and then designed a system that could support three types of queries: pattern support queries, target itemset queries, and association rule queries. These queries can be performed by quickly scanning the Itemset Tree for the support or related patterns and association rules of a particular itemset. To address the fact that the way the itemset tree stores transaction data consumes a lot of memory, Fournier-Viger \textit{et al.} \cite{fournier2013meit} proposed a more compact data structure called the memory-efficient itemset tree (MEIT). MEIT compresses the duplicate information of nodes in the top-down path of the tree to the leaf nodes, thus reducing the space overhead. Experiments have shown that MEIT can save more than half the memory on some datasets. Later, researchers improved the data structure to solve the problem of not utilizing the Apriori principle and the existence of rule restrictions in the itemset tree, thus reducing the generation of unnecessary itemsets and preventing important information from being missed. For multiple query itemsets, GFP-Growth \cite{shabtay2021guided} was proposed to perform mining tasks with less memory. GFP-Growth can also be used to discover minority-class rules from imbalanced data. In addition, the TargetUM algorithm \cite{miao2021targetum}, utilizing a lexicographic querying tree, was proposed to discover high-utility target itemsets. There are some targeted pattern mining algorithms \cite{zhang2021tusq} that deal with sequence data. Chiang \textit{et al.} \cite{chiang2003goal} proposed different definitions of target sequences and growth-based methods to complete the mining task. By using recency and monetary constraints, a PrefixSpan-based algorithm \cite{chand2012target} can discover sequences while considering the RFM model. Huang \textit{et al.} \cite{huang2022taspm} and Liang \textit{et al.} \cite{liang2025targeted} used the idea of bitmap comparison to propose an algorithm that can quickly mine more generalized target sequences. Recently, TaSRM \cite{gan2025towards} was proposed for targeted mining of sequential rules, and it makes use of invalid rule filtering and rule expansion strategies.

\section{Definitions} \label{sec:definition}

In this section, we first introduce the basic concepts and notations used in this paper. Subsequently, we formalize the problem of targeted sequential rule mining.

\begin{definition}[Sequence database \cite{zaki2001spade}]
    \rm Let $I$ = {$i_1$, $i_2$, $\cdots$, $i_{p-1}$, $i_{p}$} be a set of $p$ distinct items that appear in the database. These items can form an itemset $X$, which satisfies $X \subseteq I$. An item $i$ in an itemset $X$ can carry an additional attribute and is denoted as \textit{Attr}($i$, $X$). In this paper, if all items in the database have attribute values of 1, the mining task aims to obtain frequent rules; otherwise, the mining process is considered utility-driven rule mining. Furthermore, within an itemset, all items are sorted according to a predefined order. In the examples throughout this paper, alphabetical order is used, i.e., $a \succ_{lex} b \succ_{lex} \cdots$. A sequence $S$ is an ordered list of itemsets and can be denoted by $S$ = $<$$X_1$, $X_2$, $\cdots$, $X_{n-1}$, $X_n$$>$. Thus, a sequence database $\mathcal{D}$ containing $m$ sequences can be denoted by $\mathcal{D}$ = {$s_1$, $s_2$, $\cdots$, $s_{m-1}$, $s_{m}$}.
\end{definition}

\begin{table}[!htbp]
	\centering
	\caption{Sequence database}
	\label{database}
    \renewcommand\arraystretch{1.3}
	\begin{tabular}{|c|c|}  
		\hline 
		\textbf{sid} & \textbf{Sequence} \\
		\hline  
		\(s_{1}\) & $<$($a$[2]$d$[1]), ($a$[1]$b$[1]$e$[4]), ($c$[4]$g$[1])$>$ \\ 
		\hline
		\(s_{2}\) & $<$($b$[1]), ($a$[2]), ($c$[4]), ($d$[2]$e$[1]$g$[1])$>$ \\ 
		\hline  
		\(s_{3}\) & $<$($b$[1]), ($d$[7]), ($g$[2])$>$ \\
		\hline  
		\(s_{4}\) & $<$($e$[1]), ($a$[2]$f$[2])$>$ \\
		\hline
		\(s_{5}\) & $<$($d$[3]), ($c$[1]), ($a$[2]), ($e$[1])$>$ \\
		\hline
	\end{tabular}
\end{table}

For example, the sequence database shown in Table \ref{database} has five sequences and seven distinct items. In the first itemset of the sequence $s_1$, the attribute values of items $a$ and $d$ are 2 and 1, respectively. We use this example database in the following examples provided.

\begin{definition}[Sequential rule \cite{fournier2014erminer,gan2025towards}]
    \rm For a sequential rule representing a consequent derived from an antecedent, it can be denoted by $r$ = \{$X$\} $\rightarrow$ \{$Y$\}, where $X$ and $Y$ are two itemsets and $X$ $\cap$ $Y$ = $\emptyset$. In this paper, if the sequence $s$ with $n$ itemsets contains the antecedent of the rule $r$, then there exists a subscript value $u$ (1 $\le$ $u$ $\le$ $n$) satisfying $X$ $\subseteq$ $\bigcup_{i=1}^u$$s$($i$), where $s$($i$) is the $i$-th itemset of $s$. Besides, if there still exists a subscript value $v$ ($u$ $\textless$ $v$ $\le$ $n$) such that the equation $Y$ $\subseteq$ $\bigcup_{i=v}^n$$s$($i$) holds, we call $s$ contains $r$ or $r$ appears in $s$.
\end{definition}

\begin{definition}[Rule size and rule inclusion \cite{gan2025towards}]
    \rm The size of an itemset is the number of items it contains. Thus, for a sequential rule $r$ = $X$ $\rightarrow$ $Y$, its size can be presented as $f$ * $h$, where $f$ and $h$ are the sizes of $X$ and $Y$, respectively. If another sequential rule $t$ = $Q$ $\rightarrow$ $P$ of size $g$ * $k$ is larger than $r$, then we have $g$ $\ge$ $f$ and $h$ $\textgreater$ $k$; or $g$ $\textgreater$ $f$ and $h$ $\ge$ $k$. Besides, if $X$ $\subseteq$ $Q$ and $Y$ $\subseteq$ $P$, we call $t$ includes $r$.
\end{definition}

For example, given a sequential rule $r$ = \{$a$\} $\rightarrow$ \{$c$\}, its size is 1 * 1. It can be noted that both sequences $s_1$ and $s_2$ in Table \ref{database} contain this rule. However, the rule $r$ does not appear in $s_5$. Given another sequential rule $t$ = \{$a$\} $\rightarrow$ \{$c$, $g$\} of size 1 * 2, then it can be said that $t$ is larger than $r$. In addition, we also know that $t$ includes $r$.

\begin{definition}[Calculation of attribute value]
    \rm In an itemset, each item $i$ itself carries an attribute value to indicate its importance. However, the calculation method for the attribute value of a sequential rule varies under different mining tasks. Considering that most of the sequential rule mining tasks are based on metric frequency or utility, we mainly discuss the computation under these two mining tasks. In this paper, the attribute value of the item $i$ / itemset $X$ in the sequence $s$ is denoted by \textit{Attr}($i$ / $X$, $s$). Under the frequency metric, it can be defined as:
    $$
        \textit{Attr}(i\;/\;X, s) = \left\{
        \begin{aligned}
            &1, \quad s \;contains\; i\;/\;X \\ 
            &0, \quad s \;does\; not\; contain\; i\;/\;X \\
	      \end{aligned}
        \right
        .
    $$
    While under the utility metric, \textit{Attr}($i$, $s$) can be defined as:
    $$
        \textit{Attr}(i, s) = \textit{max}(\{\textit{Attr}(i, s(j)), 1 \le j \le n\}), 
    $$
    and \textit{Attr}($X$, $s$) can be defined as:
    $$
        \textit{Attr}(X, s) = \sum\limits_{\{i\vert i \in X\}} \textit{Attr}(i, s).
    $$  
    The attribute value of the rule $r$ = $X$ $\rightarrow$ $Y$ in the sequence $s$ can be denoted by \textit{Attr}($r$, $s$). 
    Similarly, under the frequency metric, it can be defined as:
    $$
        \textit{Attr}(r, s) = \left\{
        \begin{aligned}
            &1, \quad s \;contains\; r \\ 
            &0, \quad s \;does\; not\; contain\; r, \\
	      \end{aligned}
        \right
        .
    $$
    which represents that the rule can only appear or not appear in the sequence. While in the utility tasks, the attribute value represents the utility value. Thus, the attribute value is defined as follows: 
    $$
        \textit{Attr}(r, s) = \left\{
        \begin{aligned}
            \textit{max}\{\sum_{\{i\vert i \in X,\; j\vert j \in Y\}}\textit{Attr}(i, s_l) + \textit{Attr}(j, s_r)\},& \\s \;contains\; r& \\ 
            0, \quad s \;does\; not\; contain\; r,& \\
	      \end{aligned}
        \right 
        .
    $$
    where $s$ = $s_l$. $s_r$ (denotes the sequence connection symbol), $s_l$ contains $X$, and $s_r$ contains $Y$. Therefore, the attribute value of an item $i$ or a sequential rule $r$ in the database $\mathcal{D}$ can be denoted by \textit{Attr}($i$ / $r$, $\mathcal{D}$), and is defined as:
    $$
        \textit{Attr}(i \;/\; r, \mathcal{D}) = 
            \sum_{\{s \vert s \in \mathcal{D}\}} \textit{Attr}(i \;/\; r, s).
    $$    
\end{definition}

For example, in the sequence $s_1$ of Table \ref{database}, the attribute value of the item $a$ can be calculated as: \textit{Attr}($a$, $s_1$) = \textit{max}(\{2, 1\}) = 2. Given a sequential rule $r$ = \{$a$\} $\rightarrow$ \{$c$, $g$\} in Table \ref{database}, using the frequency metric, the attribute value of $r$ is equal to 2 (only two sequences that contain it). However, using the utility metric, the attribute value of $r$ is equal to 7 + 7 = 14.

\begin{definition}[Confidence value \cite{agrawal1994fast, hipp2000algorithms}]
    \rm For a sequential rule $r$ = $X$ $\rightarrow$ $Y$, confidence is used to measure how likely it is that the consequent will occur after the antecedent has occurred. The confidence value of $r$ in the database $\mathcal{D}$ can be denoted by \textit{conf}($r$, $\mathcal{D}$), and is defined as \textit{conf}($r$, $\mathcal{D}$) =  \textit{sup}($r$, $\mathcal{D}$) / \textit{sup}($X$, $\mathcal{D}$), where \textit{sup}($r$, $\mathcal{D}$) and \textit{sup}($X$, $\mathcal{D}$) denote the number of sequences in the database $\mathcal{D}$ containing $r$ and $X$, respectively.
\end{definition}

Given a sequential rule $r$ = \{$a$\} $\rightarrow$ \{$c$, $g$\} in Table \ref{database}, we can know that the confidence value of $r$ is equal to 0.5. 

\begin{definition}[Target sequential rule]
    \rm The target sequential rule is associated with the user-provided query rule (\textit{qr} = \textit{qX} $\rightarrow$ \textit{qY}) and predefined minimum thresholds. Generally, in a mining task, a qualified target sequential rule satisfies the following conditions: 1) the size is larger than or equal to 1*1; 2) the attribute value is greater than the minimum attribute threshold (\textit{minAttr}); 3) the confidence value is greater than the minimum confidence threshold (\textit{minConf}); and 4) includes the query rule \textit{qr}.
\end{definition}

\textbf{Problem statement.} Given a sequence database $\mathcal{D}$, a mining metric (frequency or utility), two minimum thresholds (including \textit{minAttr} and \textit{minConf}), and a query rule \textit{qr}, the problem of the paper is to discover all qualified target sequential rules.

\section{Algorithm} \label{sec:algorithm}

In this section, we first present the important operations of the rule-growth-based algorithm and then propose tight upper bounds on the items, as well as effective pruning strategies. Finally, the main procedure of the algorithm is shown.

\begin{definition}[Rule expansion \cite{gan2025towards}]
    \rm For the rule-growth-based algorithm, sequential rules of size 1*1 are first generated. Larger sequential rules are gradually acquired as the rule expansion operation is continuously performed. There are two types of rule expansion operations: left-expansion and right-expansion. For a sequential rule $r$ = $X$ $\rightarrow$ $Y$ of size $f$ * $h$, it can become a rule $r^\prime$ = \{$X$ $\cup$ $i$\} $\rightarrow$ $Y$ of size $g$ * $h$ by performing a left-expansion and a rule $r^\prime$ = $X$ $\rightarrow$ \{$Y$ $\cup$ $i$\} of size $f$ * $k$ by performing a right-expansion, where $g$ = $f$ + 1, $k$ = $h$ + 1, and $i$ is the expanded item satisfying $i$ $\notin$ $X$, $Y$. 
\end{definition}

To avoid the redundant sequential rules being generated after performing expansion operations of different orders, we assume that the expanded item $i$ of a sequential rule must satisfy the following constraints \cite{fournier2015mining}: 1) $\forall$ $j$ $\in$ $X$, $j$ $\succ_{lex}$ $i$ (the left-expansion is performed) or $\forall$ $j$ $\in$ $Y$, $j$ $\succ_{lex}$ $i$ (the right-expansion is performed); 2) the left-expansion (right-expansion) operation cannot be re-performed after performing another expansion operation. For example, given a sequential rule $r$ = \{$c$\} $\rightarrow$ \{$a$, $e$\}, the item $d$ can only be used to perform a left-expansion on $r$, resulting in the expanded sequential rule is $r^\prime$ = \{$c$, $d$\} $\rightarrow$ \{$a$, $e$\}.

\begin{algorithm}[h]
    \small
	\caption{Finding expanded items}
	\label{AL:find ei}
	\LinesNumbered
	\KwIn{$r$ = $X$ $\rightarrow$ $Y$: a sequential rule; $\mathcal{D}_r$: a set of those sequences containing $r$, \textit{minAttr}: a minimum attribute threshold.} 
	\KwOut{all expanded items of $r$.}
	
	initialize \textit{LEI} = $\varnothing$, \textit{REI} = $\varnothing$\;	
	\For{$s$ $\in$ $\mathcal{D}_r$}{
		\For{\rm $\textit{xe}$ $\in$ $\bigcup_{i={1}}^{s_{py}}$$s$($i$) satisfying constraints}{
			update \textit{LEI}\;
		}
		\For{\rm $\textit{ye}$ $\in$ $\bigcup_{i={s_{px}+1}}^{n}$$s$($i$) satisfying constraints}{
			update \textit{REI}\;
		}	
	}

	\textbf{return} \textit{LEI} and \textit{REI}
\end{algorithm}

When performing rule expansion operations, how to find out all expanded items? For a sequence $s$ of length $n$ that contains the sequential rule $r$ = $X$ $\rightarrow$ $Y$, we define two important positions, which include $s_{px}$ and $s_{py}$. $s_{px}$ is the smallest index of the itemset in the sequence $s$ that can satisfy $X$ $\subseteq$ $\bigcup_{i=1}^{s_{py}}$$s$($i$). While $s_{py}$ is the biggest index of the itemset in the sequence $s$ that can satisfy $Y$ $\subseteq$ $\bigcup_{i=s_{px}+1}^{l}$$s$($i$). Then, the pseudocode to find expanded items is shown in Algorithm \ref{AL:find ei}. The algorithm first initializes two information sets \textit{LEI} and \textit{REI} for the input sequential rule $r$ (line 1), where \textit{LEI} is used to record the left-expanded items of $r$ and the corresponding information, and \textit{REI} is used for the right-expansion part. For those sequences containing $r$, a scan is executed. The algorithm finds the expanded items from the sequence and then updates the \textit{LEI} or \textit{REI} (lines 3-8). Finally, the \textit{LEI} and \textit{REI} are returned (line 10). 

The search for target sequential rules requires the user to provide a query rule, called \textit{qr}. Those sequences that do not contain \textit{qr} in the database can be safely removed. We first use a strategy to remove invalid sequences and then propose a technique to modify the remaining sequences in the database.

\begin{strategy}[Invalid sequences removal]
    \label{isr}
    \rm For a query rule \textit{qr} = \textit{qX} $\rightarrow$ \textit{qY}, those sequences in the database that do not contain \textit{qX} $\rightarrow$ $\varnothing$ are called invalid sequences and can be removed. When a database scan is performed using this strategy, a table called the key position table is also created to store the important positions ($s_{px}$ and $s_{py}$) of valid sequences. In the worst case, the table consumes memory of $O$(2 * $\vert$$\mathcal{D}$$\vert$).
\end{strategy}

\begin{definition}[Rule instance set]
    \rm For a sequential rule $r$ = $X$ $\rightarrow$ $Y$, a set called left rule instance set (\textit{LRIS}) = \{$s_{px}$\} and a set called right rule instance set (\textit{RRIS}) = \{$s_{py}$\} are initialized. Next, for indexes greater than $s_{px}$, if its corresponding itemset contains any of the items of $X$, then it is added to \textit{LRIS}; for indexes less than $s_{py}$, if its corresponding itemset contains any of the items of $Y$, then it is added to \textit{RRIS}.
    For any two indexes \textit{li} and \textit{ri} (\textit{li} $\in$ \textit{LRIS}, \textit{ri} $\in$ \textit{RRIS}, and \textit{li} $\textless$ \textit{ri}) of the sequence $s$, the attribute value of the rule $r$ under them is still 1 in the tasks using the frequency metric. While in complex utility mining, it is denoted by \textit{Attr}($r$, \{$s$, \textit{li}, \textit{ri}\}) and can be defined as:
    $$
    \begin{aligned}
        \textit{Attr}(r, \{s, \textit{li}, \textit{ri}\}) &= \textit{Attr}(X, \textit{s}_{1-li}) + \textit{Attr}(Y, \textit{s}_{ri-n})
        \\
        &= 
            \sum\limits_{\{i \vert i \in \textit{X}\}} \textit{Attr}(i, \textit{s}_{1-li}) 
            + \sum\limits_{\{j \vert j \in \textit{Y}\}} \textit{Attr}(j, \textit{s}_{ri-n}),
    \end{aligned}
    $$
    where $\textit{s}_{1-li}$ = $<$$s$(1)$\cdots$$s$($s_{li}$)$>$ and $\textit{s}_{ri-n}$ = $<$$s$($s_{ri}$)$\cdots$$s$($n$)$>$.
\end{definition}

\begin{technique}[Remaining sequences modification]
    \label{te rsm}
    \rm The sequence $s$ of length $n$ containing the query rule \textit{qr} = \textit{qX} $\rightarrow$ \textit{qY} in the database has many important positions that can be composed of rule instances. In this paper, we first define two novel items (\textit{iX} and \textit{iY}) in the smallest order. For sequences that contain \textit{qr}, there are two cases to be considered. 1)  
    In the first case, if \textit{Attr}($r$, $s$) is equal to \textit{Attr}($r$, \{$s$, $s_{px}$, $s_{py}$\}), then we put (\textit{iY}[\textit{Attr}(\textit{iY})]) in the $s_{py}$ position of the sequence and (\textit{iX}[\textit{Attr}(\textit{iX})]) in the $s_{px}$+1 position of the sequence, where \textit{Attr}(\textit{iY}) = $\textit{Attr}$(\textit{qY}, $\textit{s}_{s_{py}-n}$) and \textit{Attr}(\textit{iX}) = $\textit{Attr}$(\textit{qX}, $\textit{s}_{1-s_{px}}$). 2) The second case is that \textit{Attr}($r$, $s$) is not equal to \textit{Attr}($r$, \{$s$, $s_{px}$, $s_{py}$\}). Then, for all indexes $j$ in \textit{RRIS}, we put (\textit{iY}[$\textit{Attr}$(\textit{qY}, $\textit{s}_{j-n}$)]) in the $j$ position of the sequence; for all indexes $i$ in \textit{LRIS}, we put (\textit{iX}[$\textit{Attr}$(\textit{qX}, $\textit{s}_{1-i}$)]) in the $i$+1 position of the sequence. As for those sequences that do not contain \textit{qr} but contain \textit{qX} $\rightarrow$ $\varnothing$, we add the itemset (\textit{iX}[0]) at the end of them. Since the antecedent and consequent of a sequential rule do not intersect, and the two new items of the above step can be considered such that \textit{iX} represents the summary of all items in \textit{qX} and \textit{iY} represents the summary of all items in \textit{qY}, we need to remove the items in the antecedent and consequent of \textit{qr} from the sequences in the database.
\end{technique}

For example, given a query rule \textit{qr} = \{$a$\} $\rightarrow$ \{$c$, $g$\} in Table \ref{modified database}, by using Strategy \ref{isr} and Technique \ref{te rsm}, the modified database shown in Table \ref{modified database} can be obtained. Since the sequence $s_3$ does not contain \{$a$\} $\rightarrow$ $\varnothing$, it is removed. Besides, $s_{px}$ and $s_{py}$ of the sequence $s_1$ are 1 and 3, and \textit{Attr}(\textit{qr}, $s_1$) is not equal to \textit{Attr}($s_1$, \{$s_1$, 1, 3\}). Therefore, the item \textit{iX} with different attribute values is added to different positions of the sequence.

\begin{table}[!htbp]
	\centering
	\caption{Modified sequence database}
	\label{modified database}
    \renewcommand\arraystretch{1.3}
	\begin{tabular}{|c|c|}  
		\hline 
		\textbf{sid} & \textbf{Sequence} \\
		\hline  
		\(s_{1}\) & $<$($d$[1]), (\textit{iX}[2]), ($b$[1]$e$[4]), (\textit{iX}[1]), (\textit{iY}[5])$>$ \\ 
		\hline
		\(s_{2}\) & $<$($b$[1]), (\textit{iX}[2]), (\textit{iY}[5]), ($d$[2]$e$[1])$>$ \\ 
		\hline   
		\(s_{4}\) & $<$($e$[1]), ($f$[2]), (\textit{iX}[0])$>$ \\
		\hline
		\(s_{5}\) & $<$($d$[3]), ($c$[1]), ($e$[1]), (\textit{iX}[0])$>$ \\
		\hline
	\end{tabular}
\end{table}

\begin{lemma}
    \label{le1}
    \rm For a database $\mathcal{D}$ and its modified database $\mathcal{D}^\prime$ and a query rule \textit{qr} whose length is larger than or equal to  1 * 1, if the sequential rule \textit{iX} $\rightarrow$ \textit{iY} meets two minimum thresholds in the database $\mathcal{D}^\prime$, then \textit{qr} itself is a valid target sequential rule in the database $\mathcal{D}$.
\end{lemma}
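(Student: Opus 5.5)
The plan is to show that the two quantities that decide whether a rule qualifies, namely its attribute value and its confidence, are the same for \textit{iX} $\rightarrow$ \textit{iY} in $\mathcal{D}^\prime$ as for \textit{qr} in $\mathcal{D}$. Once that is established, the lemma follows immediately. Condition 1) of the definition of a target sequential rule holds by the assumption that \textit{qr} has size at least 1 * 1. Condition 4) is trivial, since \textit{qr} includes itself. Conditions 2) and 3) then transfer directly from $\mathcal{D}^\prime$ to $\mathcal{D}$.

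\textbf{Step 1: containment correspondence.} By Strategy \ref{isr}, the sequences of $\mathcal{D}$ that are removed do not contain \textit{qX} $\rightarrow$ $\varnothing$. Such sequences contain neither \textit{qX} nor \textit{qr}, so they contribute nothing to \textit{sup}(\textit{qX}, $\mathcal{D}$), to \textit{sup}(\textit{qr}, $\mathcal{D}$), or to \textit{Attr}(\textit{qr}, $\mathcal{D}$).

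For a kept sequence $s$ I will check two things using Technique \ref{te rsm}.
\begin{itemize}
\item Its image $s^\prime$ always contains \textit{iX}. This holds because either \textit{iX} is inserted at position $s_{px}+1$ (or at positions $i+1$ for $i \in$ \textit{LRIS}), or \textit{iX}[0] is appended at the end. Hence \textit{sup}(\textit{iX}, $\mathcal{D}^\prime$) = \textit{sup}(\textit{qX}, $\mathcal{D}$).
\item $s^\prime$ contains \textit{iX} $\rightarrow$ \textit{iY} if and only if $s$ contains \textit{qr}. Item \textit{iY} is inserted only into sequences containing \textit{qr}. In those, an occurrence of \textit{iX} lies at index at most $s_{px}+1 \le s_{py}$, and \textit{iY} lies at $s_{py}$ or later.
\end{itemize}
Together these give \textit{sup}(\textit{iX} $\rightarrow$ \textit{iY}, $\mathcal{D}^\prime$) = \textit{sup}(\textit{qr}, $\mathcal{D}$), and therefore equal confidences. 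Under the frequency metric the attribute values are these supports, so that case is finished at this point.

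\textbf{Step 2: utility values, the main obstacle.} Here I must show \textit{Attr}(\textit{iX} $\rightarrow$ \textit{iY}, $s^\prime$) = \textit{Attr}(\textit{qr}, $s$) for each sequence $s$ that contains \textit{qr}.

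The attribute value of \textit{qr} in $s$ is a maximum over split points $u < v$ of \textit{Attr}(\textit{qX}, $s_{1-u}$) + \textit{Attr}(\textit{qY}, $s_{v-n}$). Both terms are monotone: the first is nondecreasing in $u$ and the second is nonincreasing in $v$. Their values change only at indices where an item of \textit{qX} (respectively \textit{qY}) occurs. Consequently the maximum is attained at a pair (\textit{li}, \textit{ri}) with \textit{li} $\in$ \textit{LRIS}, \textit{ri} $\in$ \textit{RRIS} and \textit{li} $<$ \textit{ri}.

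The copies of \textit{iX} and \textit{iY} carry exactly these prefix and suffix values. In $s^\prime$, a pair consisting of a copy of \textit{iX} and a later copy of \textit{iY} therefore corresponds to such a split point, and the maximization over the inserted copies reproduces \textit{Attr}(\textit{qr}, $s$). In the first case of Technique \ref{te rsm} the maximum is already attained at ($s_{px}$, $s_{py}$), so the single inserted pair suffices.

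The delicate point is the index shift: \textit{iX} is placed at position $i+1$ while \textit{iY} is placed at position $j$. I need to check that every admissible pair \textit{li} $<$ \textit{ri} still yields a valid ordered occurrence of \textit{iX} before \textit{iY}. I also need to check that no spurious pair gives a larger value. I will handle this by reading ``put at position $i+1$'' as inserting a new itemset immediately after $s(i)$, so that the relative order is preserved.

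\textbf{Step 3: summing.} Summing over all sequences gives \textit{Attr}(\textit{iX} $\rightarrow$ \textit{iY}, $\mathcal{D}^\prime$) = \textit{Attr}(\textit{qr}, $\mathcal{D}$), which completes the transfer of condition 2) and hence the proof.
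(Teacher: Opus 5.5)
Your proposal is correct and follows the paper's route: you get per-sequence equality \textit{Attr}(\textit{iX} $\rightarrow$ \textit{iY}, $s^\prime$) = \textit{Attr}(\textit{qr}, $s$) from Technique~\ref{te rsm}, sum over sequences (removed sequences contribute nothing), and use the fact that every sequence of $\mathcal{D}^\prime$ contains \textit{iX} to make the confidences agree. Your version supplies details the paper only asserts, namely the monotonicity argument that places an optimal split in \textit{LRIS} $\times$ \textit{RRIS} and the convention that each \textit{iX} is inserted as a new itemset immediately after $s(i)$, which keeps the \textit{iX}-before-\textit{iY} order correct.
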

\begin{proof}
    \rm For a sequence $s$ in $\mathcal{D}$ that does not contain \textit{qr}, we can have that \textit{Attr}(\textit{qr}, $s$) = 0. This means that this sequence does not affect the calculation of the attribute value of \textit{qr}. For a sequence $t$ in $\mathcal{D}$ containing \textit{qr} and its modified sequence $t^\prime$ in $\mathcal{D}^\prime$, Technique \ref{te rsm} allows \textit{Attr}(\textit{iX} $\rightarrow$ \textit{iY}, $t^\prime$) = \textit{Attr}(\textit{qr}, $t$). Thus, we can have that \textit{Attr}(\textit{iX} $\rightarrow$ \textit{iY}, $\mathcal{D}^\prime$) = \textit{Attr}(\textit{qr}, $\mathcal{D}$). Furthermore, all sequences in $\mathcal{D}^\prime$ contain \textit{iX}, ensuring the accurate calculation of the confidence value for \textit{iX} $\rightarrow$ \textit{iY}. Therefore, Lemma \ref{le1} holds.
\end{proof}

\begin{lemma}
    \label{le2}
    \rm All expansion situations of the sequential rule \textit{iX} $\rightarrow$ \textit{iY} in the database $\mathcal{D}^\prime$ are the same as those of the query rule \textit{qr} in the database $\mathcal{D}$.
\end{lemma}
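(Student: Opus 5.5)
We prove Lemma~\ref{le2} by exhibiting a correspondence between expansions and checking that it preserves both containment and attribute values. Define the map $\phi$ that sends a rule $X \rightarrow Y$ in $\mathcal{D}^\prime$ with $\textit{iX} \in X$ and $\textit{iY} \in Y$ to the rule $(X \setminus \{\textit{iX}\}) \cup \textit{qX} \rightarrow (Y \setminus \{\textit{iY}\}) \cup \textit{qY}$ in $\mathcal{D}$. Technique~\ref{te rsm} removes the items of \textit{qX} and \textit{qY} from every sequence, and every rule that includes \textit{qr} has the form $\textit{qX} \cup A \rightarrow \textit{qY} \cup B$ with $A, B$ disjoint from $\textit{qX} \cup \textit{qY}$. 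Hence $\phi$ is a bijection between rules of $\mathcal{D}^\prime$ that include $\textit{iX} \rightarrow \textit{iY}$ and rules of $\mathcal{D}$ that include \textit{qr}.

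Because \textit{iX} and \textit{iY} are declared to have the smallest order, the expansion constraints (the lexicographic condition and the ordering of left/right expansions) admit the same items $A$ and $B$ in the same order on both sides. So the expansion trees rooted at $\textit{iX} \rightarrow \textit{iY}$ and at \textit{qr} are isomorphic under $\phi$.

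It remains to show that for every sequence $t$ and its modified $t^\prime$, $t^\prime$ contains $r^\prime$ if and only if $t$ contains $\phi(r^\prime)$, and that $\textit{Attr}(r^\prime, t^\prime) = \textit{Attr}(\phi(r^\prime), t)$. The structure of the argument is as follows.

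\textbf{Containment, forward direction.} Suppose $t$ contains $\phi(r^\prime)$ at cut indices $u < v$. Take the largest $i \in \textit{LRIS}$ with $i \le u$ and the smallest $j \in \textit{RRIS}$ with $j \ge v$.
\begin{itemize}
\item Since \textit{iX} is placed at position $i+1$, the prefix of $t^\prime$ contains $A \cup \{\textit{iX}\}$.
\item Since \textit{iY} is placed at position $j$, the suffix of $t^\prime$ contains $B \cup \{\textit{iY}\}$.
\item These positions respect the same order because the inserted itemsets only shift indices monotonically.
\end{itemize}

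\textbf{Containment, converse.} Positions of \textit{iX} and \textit{iY} in $t^\prime$ correspond to valid \textit{LRIS} and \textit{RRIS} indices of $t$, so an occurrence in $t^\prime$ pulls back to an occurrence in $t$.

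\textbf{Sequences without \textit{qr}.} A sequence containing only $\textit{qX} \rightarrow \varnothing$ carries $\textit{iX}[0]$ at its end. It therefore still contributes to the antecedent support, so the confidence denominators agree. It never contains \textit{iY}, so it contributes nothing to the numerator, exactly as in $\mathcal{D}$. Sequences removed by Strategy~\ref{isr} do not contain \textit{qX} and hence contribute to neither count.

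\textbf{Attribute values.} For the utility metric, the value of \textit{iX} inserted after index $i$ equals $\textit{Attr}(\textit{qX}, s_{1-i})$, and the value of \textit{iY} at index $j$ equals $\textit{Attr}(\textit{qY}, s_{j-n})$. Thus, for each cut, the utility in $t^\prime$ equals the utility of $\phi(r^\prime)$ in $t$ at the corresponding instance, and maximizing over cuts gives equality. This extends the argument of Lemma~\ref{le1} from $\textit{iX} \rightarrow \textit{iY}$ to all its expansions.

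\textbf{Main obstacle.} The hardest step is this maximization argument. One must show that restricting cuts to the \textit{LRIS} and \textit{RRIS} indices loses no optimal occurrence. For an optimal cut $u$, moving it back to the largest \textit{LRIS} index $\le u$ keeps the value $\textit{Attr}(\textit{qX}, s_{1-u})$, since no \textit{qX} item appears in between, and it can only enlarge the available suffix. The symmetric move applies to \textit{iY}. Case 1 of Technique~\ref{te rsm}, where a single pair of positions is used, needs separate care. There I would argue that the $(s_{px}, s_{py})$ instance already attains the maximum. I would also check that placing \textit{iX} at $s_{px}+1$ rather than $s_{px}$ does not exclude items $A$ lying in itemset $s_{px}$: those items still precede \textit{iX} and so still lie in the antecedent prefix.
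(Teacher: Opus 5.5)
Your route differs substantially from the paper's. The paper's proof only looks at the first level of expansion. Every modified sequence $s^\prime$ of a sequence containing \textit{qr} carries at least one \textit{iX} and one \textit{iY}. So the itemsets of $s$ up to $s_{py}$, and from $s_{px}$ on, contain the same items as the corresponding parts of $s^\prime$, up to the deleted items of \textit{qX}, \textit{qY} and the inserted \textit{iX}, \textit{iY}. Hence the left- and right-expandable items of \textit{qr} in $s$ and of \textit{iX} $\rightarrow$ \textit{iY} in $s^\prime$ coincide. Attribute values and deeper expansions are left to Lemmas \ref{le1} and \ref{le3}. You instead set up the rule-level bijection $\phi$ and prove, sequence by sequence, containment equivalence and equality of attribute values for every super-rule, using the \textit{LRIS}/\textit{RRIS} placement. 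The one-item case of your containment equivalence is exactly the paper's claim. The general case also gives Lemma \ref{le3} and the extension of Lemma \ref{le1}. The price is that you must handle both cases of Technique \ref{te rsm}, which the paper's proof never examines.

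Several steps need repair, though the overall structure survives.

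\begin{enumerate}
\item Drop the claim that the expansion trees are isomorphic under $\phi$; it is false and not needed. On the $\mathcal{D}$ side the lexicographic constraint is checked against the actual items of \textit{qX} and \textit{qY}, not against items of smallest order. For example, with the paper's ordering, $b$ cannot left-expand \{$c$\} $\rightarrow$ \{$a$, $e$\} in $\mathcal{D}$, while \textit{iX} does not block it in $\mathcal{D}^\prime$. This mismatch is the very reason \textit{iX} and \textit{iY} are introduced. What the lemma needs is your bijection together with preservation of containment and attribute values.

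\item In Case 1 only $s_{px}$ and $s_{py}$ receive \textit{iX} and \textit{iY}. Your forward containment step must therefore use those indices, not the largest \textit{LRIS} or smallest \textit{RRIS} index. Also, when $s_{py}=s_{px}+1$ both new itemsets fall into the same gap. You then need the convention that \textit{iX} precedes \textit{iY} (as in the paper's $s_2$); monotone index shifting alone does not give this.

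\item For utilities in Case 1, it is not enough that $(s_{px},s_{py})$ attains the value of \textit{qr}. An expanded rule may be maximized at a different cut. What you need is that for every admissible cut $k$ with $s_{px}\le k<s_{py}$, $\textit{Attr}(\textit{qX},s_{1-k})=\textit{Attr}(\textit{qX},s_{1-s_{px}})$ and $\textit{Attr}(\textit{qY},s_{(k+1)-n})=\textit{Attr}(\textit{qY},s_{s_{py}-n})$. This does follow from the Case 1 hypothesis. Both terms are monotone in $k$ and bounded below by their values at the extreme pair. So the maximum equalling the extreme-pair value forces equality term by term.

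\item In your ``main obstacle'' step, do not literally move the cut back to the \textit{LRIS} index $i^*$. That can lose items of $A$ lying in itemsets $i^*+1,\dots,u$. Keep the cut at $u$, and note that the copy of \textit{iX} inserted after $i^*$ lies in the prefix and carries $\textit{Attr}(\textit{qX},s_{1-u})$. In the other direction, an arbitrary cut of $t^\prime$ gives only $\le$ the value of the corresponding cut of $t$, by monotonicity, not equality. That is still enough for the two maxima to coincide.
\end{enumerate}
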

\begin{proof}
    \rm For a sequence $s$ containing \textit{qr} and its modified sequence $s^\prime$, using Technique \ref{te rsm}, $s^\prime$ has at least one \textit{iX} and one \textit{iY}. Then, we can know that $\bigcup_{i={1}}^{s_{py}}$$s$($i$) = $\bigcup_{i={1}}^{s^\prime_{py}}$$s^\prime$($i$) and $\bigcup_{i=s_{px}}^{n}$$s$($i$) = $\bigcup_{i=s^\prime_{px}}^{n^\prime}$$s^\prime$($i$). Therefore, the left- / right-expanded items of \textit{qr} in $s$ are the same as the left- / right-expanded items of \textit{iX} $\rightarrow$ \textit{iY} in $s^\prime$, and thus Lemma \ref{le2} holds.
\end{proof}

\begin{lemma}
    \label{le3}
    \rm Given two itemsets (\textit{LI} and \textit{RI}, which are not both empty) containing those items that can be left- / right-expanded, if \{\textit{iX} $\cup$ \textit{LI}\} $\rightarrow$ \{\textit{iY} $\cup$ \textit{RI}\} meets two minimum thresholds in the database $\mathcal{D}^\prime$, then \{\textit{qX} $\cup$ \textit{LI}\} $\rightarrow$ \{\textit{qy} $\cup$ \textit{RI}\} of length larger than or equal to 1 * 1 is a target sequential rule of \textit{qr} in the database $\mathcal{D}$.
\end{lemma}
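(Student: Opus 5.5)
The plan is to lift the argument of Lemma \ref{le1} from the query rule to its expansions. Lemma \ref{le2} supplies the bridge: the expansion situations of \textit{iX} $\rightarrow$ \textit{iY} in $\mathcal{D}^\prime$ coincide with those of \textit{qr} in $\mathcal{D}$. We fix left- and right-expandable itemsets \textit{LI} and \textit{RI}, write $r^\prime$ = \{\textit{iX} $\cup$ \textit{LI}\} $\rightarrow$ \{\textit{iY} $\cup$ \textit{RI}\} and $r$ = \{\textit{qX} $\cup$ \textit{LI}\} $\rightarrow$ \{\textit{qY} $\cup$ \textit{RI}\}, and aim for two equalities: \textit{Attr}($r^\prime$, $\mathcal{D}^\prime$) = \textit{Attr}($r$, $\mathcal{D}$) and \textit{conf}($r^\prime$, $\mathcal{D}^\prime$) = \textit{conf}($r$, $\mathcal{D}$). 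Once these hold, the threshold conditions transfer. Inclusion of \textit{qr} and size at least 1 * 1 are immediate, since \textit{qX} $\subseteq$ \textit{qX} $\cup$ \textit{LI} and \textit{qY} $\subseteq$ \textit{qY} $\cup$ \textit{RI}. Hence $r$ is a target sequential rule.

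\textbf{Step 1 (containment).} We show that for every $s \in \mathcal{D}$ with modified image $s^\prime$, $s$ contains $r$ iff $s^\prime$ contains $r^\prime$. Sequences removed by Strategy \ref{isr} contain neither \textit{qX} nor $r$, so they contribute nothing to either side. For a remaining sequence, Technique \ref{te rsm} places \textit{iX} right after each index in \textit{LRIS} and \textit{iY} at each index in \textit{RRIS}. Therefore an occurrence of \textit{iX} before an occurrence of \textit{iY} in $s^\prime$ corresponds exactly to a pair (\textit{li}, \textit{ri}) with \textit{li} $<$ \textit{ri} in $s$ at which \textit{qX} is covered up to \textit{li} and \textit{qY} from \textit{ri} on. By Lemma \ref{le2}, the item sets available before $s_{py}$ and after $s_{px}$ agree, so the items of \textit{LI} and \textit{RI} can be placed in $s$ iff they can be placed in $s^\prime$.

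\textbf{Step 2 (attribute value).} In the frequency case, Step 1 already gives equal supports. In the utility case, we decompose \textit{Attr}($r$, $s$) as the maximum over admissible splits of \textit{Attr}(\textit{qX}, prefix) + \textit{Attr}(\textit{LI}, prefix) + \textit{Attr}(\textit{qY}, suffix) + \textit{Attr}(\textit{RI}, suffix). The attribute stored in the \textit{iX} (resp. \textit{iY}) placed at the corresponding position equals \textit{Attr}(\textit{qX}, $s_{1-i}$) (resp. \textit{Attr}(\textit{qY}, $s_{j-n}$)). The \textit{LI} and \textit{RI} terms are unchanged, because only \textit{qX} and \textit{qY} items were removed. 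Taking the maximum and summing over sequences then gives the equality.

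\textbf{Step 3 (antecedent support).} Every surviving sequence carries \textit{iX} (appended as \textit{iX}[0] when \textit{qr} is absent), and \textit{LI} items are preserved. Hence $s$ contains \textit{qX} $\cup$ \textit{LI} iff $s^\prime$ contains \textit{iX} $\cup$ \textit{LI}, which gives equal denominators in the confidence.

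\textbf{Main obstacle.} I expect the hardest part to be the utility maximisation in Step 2, specifically showing the maximising split is not lost in the first case of Technique \ref{te rsm}, where only $s_{px}$+1 and $s_{py}$ receive markers. I would argue that this case is used only when the $(s_{px}, s_{py})$ split already attains the maximum. The delicate point is that extra \textit{LI}/\textit{RI} items could favour a different split, and I would first try to handle this by appealing to Lemma \ref{le2} and the key position table. A similar subtlety in Step 3 is an \textit{LI} item occurring only after the appended \textit{iX}[0]; there I would use the expansion constraints and Lemma \ref{le2} to restrict attention to positions up to $s_{py}$.
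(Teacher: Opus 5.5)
Your outline matches the paper's own proof: containment via Lemma~\ref{le2}, \textit{LI}/\textit{RI} utilities left untouched while \textit{iX}/\textit{iY} carry the \textit{qX}/\textit{qY} utilities, and equal antecedent supports. In one respect it is better. The paper only states that $s$ containing the expanded rule implies $s^\prime$ containing the marker version, but transferring thresholds from $\mathcal{D}^\prime$ to $\mathcal{D}$ needs the converse, which your ``iff'' supplies.

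\textbf{The gap.} The gap is the step you flag yourself, and Lemma~\ref{le2} plus the key position table cannot close it. They say which items are expandable, not what utility a cut yields. The missing idea is monotonicity. Put $A(u)=\textit{Attr}(\textit{qX}, s_{1-u})$ and $B(v)=\textit{Attr}(\textit{qY}, s_{v-n})$. Because item attributes are maxima over the prefix or suffix:
\begin{itemize}
\item $A$ is nondecreasing and $B$ is nonincreasing;
\item $\textit{Attr}(\textit{qr}, s)=\max_{s_{px}\le u<s_{py}}\bigl(A(u)+B(u+1)\bigr)$;
\item every cut admissible for the expanded rule also lies in $[s_{px}, s_{py}-1]$.
\end{itemize}

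In the first case of Technique~\ref{te rsm}, compare this maximum with the cuts $u=s_{py}-1$ and $u=s_{px}$. This gives $A(s_{py}-1)\le A(s_{px})$ and $B(s_{px}+1)\le B(s_{py})$, so $A$ and $B$ are constant on the whole range. Hence, whichever cut the \textit{LI}/\textit{RI} items favour, the \textit{qX}/\textit{qY} part equals the single marker values. The cuts of $s$ in that range correspond to the cuts of $s^\prime$ with \textit{iX} on the left and \textit{iY} on the right, with the same \textit{LI}/\textit{RI} items on each side.

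In the second case you need two further facts. $A$ only increases at itemsets containing \textit{qX} items, all of which lie in \textit{LRIS} from $s_{px}$ on. $B$ only changes at itemsets containing \textit{qY} items, all of which lie in \textit{RRIS} up to $s_{py}$. Together with monotonicity, these show the best marker on each side of the cut after $u$ is worth exactly $A(u)$ and $B(u+1)$. A cut just before an inserted \textit{iX} is dominated by the cut just after it.

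This cut correspondence is also what Step~1 really rests on. Lemma~\ref{le2} concerns individual expandable items, whereas containment needs a single cut serving all of \textit{LI} and \textit{RI} at once.

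\textbf{Step 3.} Your worry here is vacuous, and the proposed fix should not be applied. Containing an antecedent, which is what the confidence denominator counts, is position-free: $s$ contains $X$ iff $X\subseteq\bigcup_{i=1}^{n}s(i)$. This is why $\textit{conf}$ of $\{a\}\rightarrow\{c, g\}$ is $2/4$ in the running example. So the first sentence of your Step~3 already proves $\textit{sup}(\textit{qX}\cup\textit{LI}, \mathcal{D})=\textit{sup}(\textit{iX}\cup\textit{LI}, \mathcal{D}^\prime)$. Restricting to positions up to $s_{py}$ would change that count, and is not even defined for the sequences carrying \textit{iX}[0], which do not contain \textit{qr}.

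\textbf{Size condition.} This does not follow from inclusion of \textit{qr} as you claim. It is part of the hypothesis of the statement.
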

\begin{proof}
    \rm For a query rule \textit{qr}, according to Lemma \ref{le2}, we can know that if $s$ contains \{\textit{qX} $\cup$ \textit{LI}\} $\rightarrow$ \{\textit{qy} $\cup$ \textit{RI}\}, then $s^\prime$ contains \{\textit{iX} $\cup$ \textit{LI}\} $\rightarrow$ \{\textit{iY} $\cup$ \textit{RI}\}. Besides, the attribute values of the expanded items are not modified, and \textit{sup}(\{\textit{qX} $\cup$ \textit{LI}\}, $\mathcal{D}$) = \textit{sup}(\{\textit{iX} $\cup$ \textit{LI}\}, $\mathcal{D}^\prime$). Therefore, Lemma \ref{le3} holds.
\end{proof}

\begin{definition}[Unpromising item]
    \rm In a sequence database $\mathcal{D}$, if an item $e$ satisfies \textit{Attr}($e$, $\mathcal{D}$) $\textless$ \textit{minAttr}, it is called a global unpromising item. For a sequential rule $r$ = $X$ $\rightarrow$ $Y$, let \textit{lr} = \{$X$ $\cup$ $e$\} $\rightarrow$ $Y$. If the attribute values of \textit{lr} and all its expanded rules are less than \textit{minAttr}, then \textit{e} is the unpromising left-expanded item of $r$. Next, let \textit{rr} = $X$ $\rightarrow$ \{$Y$ $\cup$ $e$\}. If the attribute values of \textit{rr} and all its expanded rules are less than \textit{minAttr}, then \textit{e} is the unpromising right-expanded item of $r$.
\end{definition}

To obtain the target sequential rules more efficiently, the unpromising items need to be removed or identified as early as possible during the mining preparation or process. Operations involving these unpromising items do not affect the generation of correct results. This paper first introduces the upper bounds that can be used to estimate the maximum attribute value of items. Then, the related pruning strategies are proposed.

\begin{upper bound}[Basic upper bound] 
    \rm In the sequence $s$, the basic upper bound value of an item $i$ (denoted by \textit{UB}($i$, $s$)) must be greater than or equal to the maximum attribute value that this sequence can provide for rules. Thus, for an item $i$ in the database $\mathcal{D}$, its basic upper bound value can be denoted by \textit{UB}($i$, $\mathcal{D}$), and is defined as \textit{UB}($i$, $\mathcal{D}$) = $\sum\limits_{\{s \vert s \in \mathcal{D}\}}$ \textit{UB}($i$, $s$).
\end{upper bound}

\begin{theorem}
    \label{ub bub}
    \rm If an item $i$ satisfies \textit{UB}($i$, $\mathcal{D}$) $\textless$ \textit{minAttr}, then any rule $r$ containing $i$ has \textit{Attr}($r$, $\mathcal{D}$) $\textless$ \textit{minAttr}.
\end{theorem}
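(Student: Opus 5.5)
The plan is a sequence-by-sequence domination argument followed by summation over $\mathcal{D}$. Fix an item $i$ with \textit{UB}($i$, $\mathcal{D}$) $\textless$ \textit{minAttr}, and let $r = X \rightarrow Y$ be any sequential rule with $i \in X \cup Y$. The key claim to establish is the pointwise inequality
$$
\textit{Attr}(r, s) \le \textit{UB}(i, s) \quad \text{for every } s \in \mathcal{D}.
$$
Once this holds, summing over all sequences and using the definitions \textit{Attr}($r$, $\mathcal{D}$) $= \sum_{s \in \mathcal{D}} \textit{Attr}(r, s)$ and \textit{UB}($i$, $\mathcal{D}$) $= \sum_{s \in \mathcal{D}} \textit{UB}(i, s)$ gives \textit{Attr}($r$, $\mathcal{D}$) $\le$ \textit{UB}($i$, $\mathcal{D}$) $\textless$ \textit{minAttr}, which is the statement.

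For the pointwise inequality I split on whether $s$ contains $r$. If $s$ does not contain $r$, then \textit{Attr}($r$, $s$) $= 0$ under both metrics by the calculation of attribute value. Since \textit{UB}($i$, $s$) is a bound on attribute values, which are nonnegative counts or utilities, we get $0 \le$ \textit{UB}($i$, $s$).

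If $s$ contains $r$, then in particular $s$ contains $i$. This is because $i \in X$ forces $i \in \bigcup_{j=1}^{u} s(j)$, and $i \in Y$ forces $i \in \bigcup_{j=v}^{n} s(j)$. So $s$ is one of the sequences in which \textit{UB}($i$, $s$) is defined as dominating the maximum attribute value that $s$ can provide to any rule. The value \textit{Attr}($r$, $s$) is one such provided value. Under the frequency metric it equals 1. Under the utility metric it is the maximum, over splittings $s = s_l.s_r$, of the sum of item attributes. Either way, the basic upper bound definition gives \textit{Attr}($r$, $s$) $\le$ \textit{UB}($i$, $s$).

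The main obstacle is that the basic upper bound is stated only abstractly, as something ``greater than or equal to the maximum attribute value that this sequence can provide for rules.'' So the crux is to read that definition correctly: the domination must hold for every rule containing $i$ that appears in $s$, and \textit{UB}($i$, $s$) must be nonnegative (and may be taken as $0$) when $s$ does not contain $i$. I would make this reading explicit at the start of the proof. The remaining steps are routine monotonicity of sums.
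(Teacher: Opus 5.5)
Your proposal is correct and takes the same route as the paper: you establish $\textit{Attr}(r,s) \le \textit{UB}(i,s)$ sequence by sequence from the definition of the basic upper bound, then sum over $\mathcal{D}$ to get $\textit{Attr}(r,\mathcal{D}) \le \textit{UB}(i,\mathcal{D}) < \textit{minAttr}$. Your case split on whether $s$ contains $r$, together with your note that $\textit{UB}(i,s)$ is nonnegative, is cleaner than the paper's version, which loosely writes $\textit{UB}(i,s) < \textit{minAttr}$ for each sequence before summing.
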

\begin{proof}
    \rm In a sequence $s$, for a sequential rule $r$ containing the item $i$, if the basic upper bound value of $r$ is less than \textit{minAttr}, then we can have that \textit{Attr}($r$, $s$) $\le$ \textit{UB}($i$, $s$) $\textless$ \textit{minAttr}. Thus, in a database $\mathcal{D}$, there is \textit{Attr}($r$, $\mathcal{D}$) $\le$ \textit{UB}($i$, $\mathcal{D}$) $\textless$ \textit{minAttr}. Therefore, Theorem \ref{ub bub} holds.
\end{proof}

The calculation of the basic upper bound can vary under rule mining tasks using different evaluation metrics. For example, \textit{SEU} is an upper bound that can be used in utility-driven tasks.

\begin{strategy}
    \label{ui bub}
    \rm According to Theorem \ref{ub bub}, for an item $i$, if \textit{UB}($i$, $\mathcal{D}^\prime$) $\textless$ \textit{minAttr}, then $i$ can be removed from $\mathcal{D}^\prime$.
\end{strategy}

Although the basic upper bound can be used to filter out unpromising items, tighter upper bounds can be proposed for targeted mining to remove more items from the database, thus improving efficiency.

\begin{upper bound}
    \rm Given a query rule \textit{qr} and a modified sequence database $\mathcal{D}^\prime$ (processed by Strategy \ref{isr} and Technique \ref{te rsm}), the left-expanded upper bound of an item $i$ in the database $\mathcal{D}^\prime$ can be denoted by \textit{TUB}$_L$($i$, $\mathcal{D}^\prime$), and is defined as follows.
    $$
    \begin{aligned}
        \textit{TUB}_L(i, \mathcal{D}^\prime) \;=\; \sum_{\{s \vert s \in \mathcal{D}^\prime \;\wedge\; \exists\; i.\textit{pos} \textless s_{py}\}} \textit{UB}(i, s),
    \end{aligned}
    $$
    where $i.\textit{pos}$ denotes the position of any itemset in the sequence that contains $i$.
\end{upper bound}

\begin{theorem}
\label{tubl}
    \rm Given a query rule \textit{qr} and a modified sequence database $\mathcal{D}^\prime$, if an item $i$ satisfies \textit{TUB}$_L$($i$, $\mathcal{D}^\prime$) $\textless$ \textit{minAttr}, then any expanded rule $r$ of \textit{qr} whose antecedent contains $i$ has \textit{Attr}($r$, $\mathcal{D}^\prime$) $\textless$ \textit{minAttr}.
\end{theorem}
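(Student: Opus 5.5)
The plan is to reduce the claim to the per-sequence argument already used for Theorem \ref{ub bub}. The only new ingredient is to restrict which sequences can contribute. Let $r$ be any expanded rule of \textit{qr} whose antecedent contains $i$. By Lemma \ref{le2} and Lemma \ref{le3}, we may work in $\mathcal{D}^\prime$ with the corresponding rule $r^\prime = \{\textit{iX} \cup \textit{LI}\} \rightarrow \{\textit{iY} \cup \textit{RI}\}$, where $i \in \textit{LI}$. We then write
$$
\textit{Attr}(r, \mathcal{D}^\prime) = \sum_{\{s \vert s \in \mathcal{D}^\prime\}} \textit{Attr}(r, s)
$$
and split the sum into two parts. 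The first part runs over the sequences that have some occurrence of $i$ at a position $i.\textit{pos} < s_{py}$. The second part runs over all remaining sequences.

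\textbf{Step 1: the second part vanishes.} Take a sequence $s$ in which every occurrence of $i$ lies at or after $s_{py}$, or in which $i$ does not occur at all. We show $\textit{Attr}(r, s) = 0$, i.e.\ $s$ does not contain $r$. Suppose $s$ contained $r$. Its antecedent includes both $i$ and \textit{iX}, so there is an index $u$ with $\{\textit{iX}, i\} \subseteq \bigcup_{j=1}^{u} s(j)$. Its consequent includes \textit{iY}, so there is an index $v > u$ with $\textit{iY} \in \bigcup_{j=v}^{n} s(j)$. By the definition of $s_{py}$ (the biggest position usable for the consequent) and the placement of \textit{iY} in Technique \ref{te rsm}, we have $v \le s_{py}$. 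Hence $u < s_{py}$, so $i$ occurs before $s_{py}$, which is a contradiction. This is exactly the reasoning behind Algorithm \ref{AL:find ei}: left-expanded items are drawn only from $\bigcup_{j=1}^{s_{py}} s(j)$.

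\textbf{Step 2: bound the first part.} For each remaining sequence, the definition of the basic upper bound gives $\textit{Attr}(r, s) \le \textit{UB}(i, s)$, since $r$ contains $i$. This is the same per-sequence inequality used in Theorem \ref{ub bub}. Summing over these sequences yields
$$
\textit{Attr}(r, \mathcal{D}^\prime) \le \textit{TUB}_L(i, \mathcal{D}^\prime) < \textit{minAttr}.
$$
In the frequency case the argument is identical, with $\textit{UB}(i, s) = 1$.

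\textbf{Main obstacle.} The delicate point is Step 1, specifically the boundary case where $i$ lies in the itemset at position exactly $s_{py}$. That itemset could be the one holding the consequent, so strictness is needed there. I would settle this using the item-ordering constraints and the fact that antecedent and consequent must be separated ($u < v$). I would also use that, in $\mathcal{D}^\prime$, \textit{iY} is inserted as its own itemset at position $s_{py}$, so no antecedent item can lie in the same itemset as the last usable \textit{iY}. If the paper's positions are read with the shift caused by the inserted itemsets, I would first check the indexing against Table \ref{modified database}.
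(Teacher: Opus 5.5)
Your proposal is correct and follows the paper's own argument. The paper likewise notes that a left-expanded item must occur at a position before $s_{py}$, so only those sequences can contribute, each by at most $\textit{UB}(i, s)$, and these contributions sum to $\textit{TUB}_L(i, \mathcal{D}^\prime)$. You argue directly for every such $r$ rather than for $\{\textit{qX} \cup i\} \rightarrow \textit{qY}$ and then its expansions, which is slightly cleaner. The boundary case you flag as the main obstacle is already settled by the chain $u < v \le s_{py}$ in your Step 1, so you need neither the item-ordering constraints nor the assumption that \textit{iY} sits alone in its own itemset.
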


\begin{proof}
    \rm For a sequence $s$ containing \textit{qr} = \textit{qX} $\rightarrow$ \textit{qy} in $\mathcal{D}^\prime$, it has $s_{px}$ and $s_{py}$. If an item $i$ is a left-expanded item of \textit{qr}, then it satisfies $i.\textit{pos}$ $\textless$ $s_{py}$. Thus, \textit{TUB}$_L$($i$, $\mathcal{D}^\prime$ is the maximum attribute value that sequences containing $t$ = \{\textit{qX} $\cup$ $i$\} $\rightarrow$ \textit{qy} can provide. If it is less than \textit{minAttr}, any expanded rule of $t$ has a low attribute value. Therefore, Theorem \ref{tubl} holds.
\end{proof}

\begin{strategy}
    \label{ui lub}
    \rm According to Theorem \ref{tubl}, for a query rule \textit{qr} and an item $i$, if \textit{TUB}$_L$($i$, $\mathcal{D}^\prime$) $\textless$ \textit{minAttr}, then $i$ is not used to perform the left-expansion.
\end{strategy}

\begin{upper bound}
    \rm Similarly, given a query rule \textit{qr} and a modified sequence database $\mathcal{D}^\prime$, the right-expanded upper bound of an item $i$ in the database $\mathcal{D}^\prime$ can be denoted by \textit{TUB}$_R$($i$, $\mathcal{D}^\prime$), and is defined as follows.
    $$
    \begin{aligned}
        \textit{TUB}_R(i, \mathcal{D}^\prime) \;=\; \sum_{\{s \vert s \in \mathcal{D} \;\wedge\; \exists\; i.\textit{pos} \textgreater s_{px}\}} \textit{UB}(i, s).
    \end{aligned}
    $$
\end{upper bound}

\begin{theorem}
\label{tubr}
    \rm Given a query rule \textit{qr} and a modified sequence database $\mathcal{D}^\prime$, if an item $i$ satisfies \textit{TUB}$_R$($i$, $\mathcal{D}^\prime$) $\textless$ \textit{minAttr}, then any expanded rule $r$ of \textit{qr} whose consequent contains $i$ has \textit{Attr}($r$, $\mathcal{D}^\prime$) $\textless$ \textit{minAttr}.
\end{theorem}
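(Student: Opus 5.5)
The plan mirrors the argument for Theorem \ref{tubl}, with the roles of antecedent and consequent swapped and $s_{px}$ used in place of $s_{py}$. Fix an expanded rule $r = X \rightarrow Y$ of \textit{qr} whose consequent contains $i$. By Lemma \ref{le2} and Lemma \ref{le3} we may work in $\mathcal{D}^\prime$ and regard $r$ as obtained from \textit{iX} $\rightarrow$ \textit{iY} by left- and right-expansions. Hence $X \supseteq$ \{\textit{iX}\}, $Y \supseteq$ \{\textit{iY}, $i$\}, and $i \notin X$.

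\textbf{Step 1: locate $i$ in every sequence that contributes to $r$.} I will show that every sequence $s \in \mathcal{D}^\prime$ with \textit{Attr}($r$, $s$) $> 0$ contains $r$, and that in such an $s$ the item $i$ appears at some position $i.\textit{pos} > s_{px}$.
\begin{itemize}
\item By the definition of a sequential rule, there are indices $u < v$ with $X \subseteq \bigcup_{j=1}^{u} s(j)$ and $Y \subseteq \bigcup_{j=v}^{n} s(j)$.
\item Since $X$ contains the antecedent of the query rule, the smallest admissible $u$ is at least $s_{px}$. This holds because $s_{px}$ is by definition the earliest index at which the antecedent of \textit{qr} (summarised by \textit{iX}) is completed.
\item Since $i \in Y$, the item $i$ occurs in some itemset $s(j)$ with $j \ge v > u \ge s_{px}$. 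So $s$ belongs to the index set $\{s \mid \exists\, i.\textit{pos} > s_{px}\}$ in the definition of \textit{TUB}$_R$.
\end{itemize}

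\textbf{Step 2: compare per-sequence values with \textit{UB}.} For each sequence $s$ containing $r$, the rule $r$ contains $i$. By the defining property of the basic upper bound (used in Theorem \ref{ub bub}), \textit{Attr}($r$, $s$) $\le$ \textit{UB}($i$, $s$). Sequences not containing $r$ contribute $0$.

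\textbf{Step 3: sum over the database.} Summing Step 2 over $\mathcal{D}^\prime$ and using the inclusion from Step 1 gives
\[
\textit{Attr}(r,\mathcal{D}^\prime)=\sum_{s\supseteq r}\textit{Attr}(r,s)\le\sum_{\{s\mid \exists\, i.\textit{pos}>s_{px}\}}\textit{UB}(i,s)=\textit{TUB}_R(i,\mathcal{D}^\prime)<\textit{minAttr}.
\]
The middle inequality uses that \textit{UB} is nonnegative, so adding the extra sequences in the index set can only increase the sum.

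\textbf{Main obstacle.} The delicate point is Step 1, specifically the claim that $u \ge s_{px}$, i.e.\ that any occurrence of the enlarged antecedent $X$ cannot end before $s_{px}$. This needs monotonicity: adding items to the antecedent can only push the earliest completion index later, never earlier. In $\mathcal{D}^\prime$ the antecedent of \textit{qr} is represented by \textit{iX}, which Technique \ref{te rsm} places only at positions $\ge s_{px}+1$. So I would argue directly that any occurrence of \textit{iX} lies strictly after $s_{px}$, which makes $i.\textit{pos} > s_{px}$ immediate. I must also be careful that the summation in the definition of \textit{TUB}$_R$ is written over $\mathcal{D}$ but should be read over $\mathcal{D}^\prime$. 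The sequences with \textit{iX}[0] appended do not contain \textit{iY}, so they contain no expansion of \textit{qr} and contribute nothing.
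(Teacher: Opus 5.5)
Your proposal is correct and follows essentially the same route as the paper. The paper argues that in every sequence supporting an expansion of \textit{qr} with $i$ in the consequent, $i$ must occur after $s_{px}$, so those sequences are among the ones summed in \textit{TUB}$_R$($i$, $\mathcal{D}^\prime$), and the per-sequence bound \textit{UB}($i$, $s$) dominates there. Your version makes each step explicit and, if anything, is cleaner: you bound $r$ directly instead of going through $t$ = \textit{qX} $\rightarrow$ \{\textit{qY} $\cup$ $i$\} and its expansions, and you spell out the placement of \textit{iX} at positions $\ge s_{px}+1$, the nonnegativity of \textit{UB}, and the $\mathcal{D}$ versus $\mathcal{D}^\prime$ typo in the definition.
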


\begin{proof}
    \rm For a sequence $s$ containing \textit{qr} = \textit{qX} $\rightarrow$ \textit{qy} in $\mathcal{D}^\prime$, it has $s_{px}$ and $s_{py}$. If an item $i$ is a right-expanded item of \textit{qr}, then it satisfies $i.\textit{pos}$ $\textgreater$ $s_{px}$. Thus, \textit{TUB}$_R$($i$, $\mathcal{D}^\prime$) is the maximum attribute value that sequences containing $t$ = \textit{qX} $\rightarrow$ \{\textit{qy} $\cup$ $i$\} can provide. If it is less than \textit{minAttr}, any expanded rule of $t$ has a low attribute value. Therefore, Theorem \ref{tubr} holds.
\end{proof}

\begin{strategy}
    \label{ui rub}
    \rm According to Theorem \ref{tubr}, for a query rule \textit{qr} and an item $i$, if \textit{TUB}$_R$($i$, $\mathcal{D}^\prime$) $\textless$ \textit{minAttr}, then $i$ is not used to perform the right-expansion.
\end{strategy}

\begin{strategy}
    \label{ui rui}
    \rm According to Theorems \ref{tubl} and \ref{tubr}, for a query rule \textit{qr} and an item $i$, if \textit{TUB}$_L$($i$, $\mathcal{D}^\prime$) $\textless$ \textit{minAttr} and \textit{TUB}$_R$($i$, $\mathcal{D}^\prime$) $\textless$ \textit{minAttr} are both satisfied, then $i$ can be removed from $\mathcal{D}^\prime$.
\end{strategy}

\begin{technique}[Start mining from the query rule]
    \label{te smfqr}
    \rm Given a query rule \textit{qr} and a sequence database $\mathcal{D}$, after using Strategy \ref{isr} and Technique \ref{te rsm}, the modified sequence database $\mathcal{D}^\prime$ can be obtained. According to Lemmas \ref{le1}-\ref{le3}, we can know that any qualified target sequential rule of \textit{qr} contains \textit{iX} $\rightarrow$ \textit{iY}. Thus, the mining process of the algorithm can start from the construction of \textit{iX} $\rightarrow$ \textit{iY} to find the target rules, instead of generating 1 * 1 rules first \cite{fournier2015mining, zida2015efficient}.
\end{technique}

\begin{algorithm}[h]
    \small
	\caption{The main procedure}
	\label{AL:find tr}
	\LinesNumbered
	\KwIn{\textit{qr} = \textit{qX} $\rightarrow$ \textit{qY}: a query sequential rule; $\mathcal{D}$: a sequence database, \textit{minAttr}: a minimum attribute threshold, \textit{minConf}: a minimum confidence threshold.} 
	\KwOut{all target sequential rules of \textit{qr}.}

    remove irrelevant sequences from $\mathcal{D}$ (\textbf{Strategy} \ref{isr})\; 
    use \textbf{Technique} \ref{te rsm} to obtain $\mathcal{D}^\prime$\;
    remove unpromising items from $\mathcal{D}^\prime$ (\textbf{Strategy} \ref{ui bub})\; 
    
    build \textit{ftr} = \textit{iX} $\rightarrow$ \textit{iY}\;
    \If{\rm the size of \textit{qr} is larger than or equal to 1 * 1}{
        \If{\rm \textit{Attr}(\textit{ftr}) $\ge$ \textit{minAttr} \&\& \textit{conf}(\textit{ftr}) $\ge$ \textit{minConf}}{
            output \textit{ftr}\;
        }    
    }

    obtain \textit{LEI} and \textit{REI} (\textbf{Strategies} \ref{ui lub} and \ref{ui rub})\; 
    \If{\textit{qY} \rm is not empty}
    {
        \For{$i$ $\in$ $\textit{LEI}$ }{
            {\textbf{call}} \textbf{leftExpand}(\{\textit{iX} $\cup$ $i$\} $\rightarrow$ \textit{iY})\;
	    }
    }
    \For{$i$ $\in$ $\textit{REI}$ }{
        {\textbf{call}} \textbf{rightExpand}(\textit{iX} $\rightarrow$ \{\textit{iY} $\cup$ $i$\})\;
	}	
\end{algorithm}

\begin{algorithm}[h]
    \small
	\caption{Left-expanding sequential rules}
	\label{AL:left tr}
	\LinesNumbered
	\KwIn{\textit{pr} = \textit{pX} $\rightarrow$ \textit{pY}: a potential target sequential rule.} 
	\KwOut{all qualified target sequential rules.}

    \If{\textit{qX} \rm is not empty $\vert\vert$ the size of \textit{pX} is larger than 1}{
        \If{\rm \textit{Attr}(\textit{pr}) $\ge$ \textit{minAttr} \&\& \textit{conf}(\textit{pr}) $\ge$ \textit{minConf}}{
            output \textit{pr}\;
        }  
    }
    obtain \textit{LEI} (\textbf{Strategy} \ref{ui lub})\;
        \For{$i$ $\in$ $\textit{LEI}$ }{
            {\textbf{call}} \textbf{leftExpand}(\{\textit{pX} $\cup$ $i$\} $\rightarrow$ \textit{pY})\;
	    }
\end{algorithm}

\begin{algorithm}[h]
    \small
	\caption{Right-expanding sequential rules}
	\label{AL:right tr}
	\LinesNumbered
	\KwIn{\textit{pr} = \textit{pX} $\rightarrow$ \textit{pY}: a potential target sequential rule.} 
	\KwOut{all qualified target sequential rules.}

    \If{\textit{qX} \rm is not empty $\vert\vert$ the size of \textit{pX} is larger than 1}{
        \If{\textit{qY} \rm is not empty $\vert\vert$ the size of \textit{pY} is larger than 1}{
            \If{\rm \textit{Attr}(\textit{pr}) $\ge$ \textit{minAttr} \&\& \textit{conf}(\textit{pr}) $\ge$ \textit{minConf}}{
                output \textit{pr}\;
            }       
        }
    }
    obtain \textit{LEI} and \textit{REI} (\textbf{Strategy} \ref{ui lub} and \textbf{Strategy} \ref{ui rub})\; 
    \If{\textit{qY} \rm is not empty $\vert\vert$ the size of \textit{pY} is larger than 1}
    {
        \For{$i$ $\in$ $\textit{LEI}$ }{
            {\textbf{call}} \textbf{leftExpand}(\{\textit{pX} $\cup$ $i$\} $\rightarrow$ \textit{pY})\;
	    }
    }
    \For{$i$ $\in$ $\textit{REI}$ }{
        {\textbf{call}} \textbf{rightExpand}(\textit{pX} $\rightarrow$ \{\textit{pY} $\cup$ $i$\})\;
	}
\end{algorithm}

The pseudocode of the main procedure is shown in Algorithm \ref{AL:find tr}. It demonstrates how to quickly discover target sequential rules. The original database $\mathcal{D}$ is processed by Strategy \ref{isr} and Technique \ref{te rsm} to obtain $\mathcal{D}^\prime$ (lines 1-2). Then, those unpromising global items are removed from  $\mathcal{D}^\prime$ and the first target sequential rule \textit{frt} is built (lines 3-4). If \textit{ftr} is a qualified target sequential rule of \textit{qr} (satisfying the required constraints), it is outputted (lines 5-8). Next, \textit{LEI} and \textit{REI} are obtained, which record the left- and right-expanded items, respectively (line 10). Note that strategies \ref{ui lub} and \ref{ui rub} can be used to remove useless expanded items. Finally, the left- / right-expansion procedure (shown in Algorithms \ref{AL:left tr} and \ref{AL:right tr}) is executed (lines 11-18). If the consequent of the query rule is empty, then \textit{ftr} needs to be right-extended first. This is because, once a rule has been left-expanded, the right-expansion cannot be performed in the algorithm. If this requirement is not used, many invalid rules of size no larger than or equal to 1 * 1 will be generated. In the left expanded procedure of the sequential rule \textit{pr}, if it is a qualified target sequential rule, then it is outputted (lines 1-5). If the antecedent of the query rule is empty, then \textit{pr} needs to be left-expanded at least once. Then, left-expanded items are obtained, and the associated expansions are also performed recursively (lines 6-9). The main operations of the right-expansion of the sequential rule are similar to its left-expansion. The algorithm first determines whether \textit{pr} is a qualified target sequential rule and then performs the relevant expansion on it (lines 1-16). Note that the left-expansion can only be performed if the consequent of the rule corresponding to \textit{pr} is not empty.

\section{The Most Useful Sequential Rule Search}

In a mining task, the generation of target sequential rules depends mainly on the query rule set by the user. After numerous results are obtained, an important question is which of these rules is more valuable. For example, if the query rule is set as \{$a$\} $\rightarrow$ \{$b$\}, and there is a rule \{$a$, $c$, $d$\} $\rightarrow$ \{$b$\} with a high attribute value and a rule \{$a$\} $\rightarrow$ \{$b$, $d$\} with a low attribute value, which rule is more valuable? To measure the importance of the target rule with respect to the query rule, based on several commonly used similarity coefficients, we propose two metrics: the target rule Jaccard similarity metric (TRJS) and the target rule dice similarity metric (TROS). Note that TRJS is modified from the Jaccard similarity coefficient, while TROS is modified from the Dice similarity coefficient.

\begin{definition}[Target rule jaccard similarity metric, TRJS]
    \rm Given a query rule \textit{qr} and a target rule \textit{tr} of it, the TRJS value of \textit{tr} is denoted by \textit{TRJS}(\textit{qr}, \textit{tr}), and is defined as:
    $$
        \textit{TRJS}(\textit{qr}, \textit{tr}) = \frac{\vert \mathcal{D}_{\textit{qr}} \cap \mathcal{D}_{\textit{tr}}\vert}{\vert \mathcal{D}_{\textit{qr}} \cup \mathcal{D}_{\textit{tr}}\vert}
        = \frac{\textit{sup}(\textit{tr}, \mathcal{D})}{\textit{sup}(\textit{qr}, \mathcal{D})}.
    $$
    Since the support value of \textit{tr} is greater than 0 and not greater than the support value of \textit{qr}, the TRJS value of \textit{tr} is between (0, 1]. 
\end{definition}

\begin{theorem}
    \rm Given two target rules \textit{tr}$_{a}$ and \textit{tr}$_{b}$, if \textit{tr}$_{b}$ is larger than \textit{tr}$_{a}$, then \textit{TRJS}(\textit{qr}, \textit{tr}$_{b}$) $\le$ \textit{TRJS}(\textit{qr}, \textit{tr}$_{a}$).
\end{theorem}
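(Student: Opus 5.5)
The plan is to reduce the claim to anti-monotonicity of support under rule inclusion. By the definition of TRJS, both values share the denominator $\textit{sup}(\textit{qr}, \mathcal{D})$, which is positive because $\textit{tr}_a$ and $\textit{tr}_b$ are target rules with positive support that include \textit{qr}. So it suffices to show $\textit{sup}(\textit{tr}_b, \mathcal{D}) \le \textit{sup}(\textit{tr}_a, \mathcal{D})$. Equivalently, it suffices to show $\mathcal{D}_{\textit{tr}_b} \subseteq \mathcal{D}_{\textit{tr}_a}$.

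First I would fix the intended reading of ``larger''. The definition of rule size only compares cardinalities, and size alone cannot give the claim. For example, with \textit{qr} $=\{a\}\rightarrow\{c\}$ in Table \ref{database}, $\{a,d\}\rightarrow\{c\}$ is larger than $\{a\}\rightarrow\{c,g\}$, yet neither contains the other. I would therefore state explicitly that, as in the rule-growth setting of Technique \ref{te smfqr}, $\textit{tr}_b$ is obtained from $\textit{tr}_a$ by a sequence of left-/right-expansions. In that case $\textit{tr}_b$ is larger than $\textit{tr}_a$ and also includes it in the sense of the rule-inclusion definition. That is, writing $\textit{tr}_a = X\rightarrow Y$ and $\textit{tr}_b = Q \rightarrow P$, we have $X\subseteq Q$ and $Y\subseteq P$.

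The core step is then a direct check against the containment definition of a sequential rule. Take any $s\in\mathcal{D}_{\textit{tr}_b}$. There are indices $u<v$ with $Q\subseteq \bigcup_{i=1}^{u}s(i)$ and $P\subseteq\bigcup_{i=v}^{n}s(i)$. Since $X\subseteq Q$ and $Y\subseteq P$, the same witnesses $u,v$ give $X\subseteq \bigcup_{i=1}^{u}s(i)$ and $Y\subseteq\bigcup_{i=v}^{n}s(i)$, so $s$ contains $\textit{tr}_a$. Hence $\mathcal{D}_{\textit{tr}_b}\subseteq\mathcal{D}_{\textit{tr}_a}$, and taking cardinalities gives the support inequality. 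Dividing by $\textit{sup}(\textit{qr},\mathcal{D})$ yields $\textit{TRJS}(\textit{qr},\textit{tr}_b)\le\textit{TRJS}(\textit{qr},\textit{tr}_a)$.

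I would also note briefly why the set form and the ratio form of TRJS agree: because each target rule includes \textit{qr}, the same argument gives $\mathcal{D}_{\textit{tr}}\subseteq\mathcal{D}_{\textit{qr}}$. The only real obstacle is the interpretive one above, namely that the theorem needs inclusion rather than mere size comparison. I would make that hypothesis explicit, justify it from the expansion procedure, and point to the counterexample showing it cannot be dropped. The remaining steps are routine.
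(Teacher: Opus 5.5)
Your argument is the paper's argument. Both TRJS values share the denominator $\textit{sup}(\textit{qr},\mathcal{D})$, so the claim reduces to $\textit{sup}(\textit{tr}_b,\mathcal{D})\le\textit{sup}(\textit{tr}_a,\mathcal{D})$. The paper simply asserts this inequality from ``larger''. You correctly note that it needs $\textit{tr}_b$ to include $\textit{tr}_a$, which holds for expansions, the only case the subsequent pruning remark uses. You then verify it from the containment definition by reusing the witnesses $u<v$. That makes explicit a hypothesis the paper's one-line proof silently relies on.

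One correction: your example does not show that the inclusion hypothesis cannot be dropped. For $\{a,d\}\rightarrow\{c\}$ and $\{a\}\rightarrow\{c,g\}$, the supports in Table~\ref{database} are $1$ and $2$, so the TRJS inequality actually holds there. The example only shows that ``larger'' does not imply ``includes''. Moreover, the two rules are comparable only under the literal ``$h>k$'' wording of the size definition. That wording clashes with the paper's own example that $\{a\}\rightarrow\{c,g\}$ is larger than $\{a\}\rightarrow\{c\}$.

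A real counterexample, with $\textit{qr}=\{a\}\rightarrow\{c\}$ and $\textit{sup}(\textit{qr},\mathcal{D})=2$:
\begin{itemize}
\item $\textit{tr}_a=\{a\}\rightarrow\{c,d\}$ is contained only in $s_2$.
\item $\textit{tr}_b=\{a,b\}\rightarrow\{c,g\}$ is contained in $s_1$ and $s_2$.
\end{itemize}
Here $\textit{tr}_b$ is larger under either reading of the size definition. Yet $\textit{TRJS}(\textit{qr},\textit{tr}_b)=1>1/2=\textit{TRJS}(\textit{qr},\textit{tr}_a)$, whenever the thresholds are low enough for both to be target rules.
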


\begin{proof}
    \rm If \textit{tr}$_{b}$ is larger than \textit{tr}$_{a}$, then we can have that \textit{sup}(\textit{tr}$_{b}$, $\mathcal{D}$) $\le$ \textit{sup}(\textit{tr}$_{a}$, $\mathcal{D}$) and thus \textit{TRJS}(\textit{qr}, \textit{tr}$_{b}$) $\le$ \textit{TRJS}(\textit{qr}, \textit{tr}$_{a}$).
\end{proof}

Therefore, if the TRJS value of a sequential rule is smaller than the minimum TRJS threshold (\textit{minTRJS}) set by users, then all of its expansion operations can be terminated.

\begin{definition}[Target rule dice similarity metric, TROS]
    \rm Given a query rule \textit{qr} and a target rule \textit{tr} of it, the TROS value of \textit{tr} is denoted by \textit{TROS}(\textit{qr}, \textit{tr}), and is defined as:
    $$
        \textit{TROS}(\textit{qr}, \textit{tr}) = \frac{2 \;*\; \vert \mathcal{D}_{\textit{qr}} \cap \mathcal{D}_{\textit{tr}}\vert}{
        \vert \mathcal{D}_{\textit{qr}} \vert \;+\; \vert \mathcal{D}_{\textit{tr}}\vert}
        = \frac{2 \;*\;\textit{sup}(\textit{tr}, \mathcal{D})}{\textit{sup}(\textit{qr}, \mathcal{D}) \;+\; \textit{sup}(\textit{tr}, \mathcal{D})}.
    $$
    Since the support value of \textit{tr} is greater than 0 and not greater than the support value of \textit{qr}, the TRJS value of \textit{tr} is between (0, 1]. 
\end{definition}

\begin{theorem}
    \label{th:tros}
    \rm Given two target rules \textit{tr}$_{a}$ and \textit{tr}$_{b}$, if \textit{tr}$_{b}$ is large than \textit{tr}$_{a}$, then \textit{TROS}(\textit{qr}, \textit{tr}$_{b}$) $\le$ \textit{TROS}(\textit{qr}, \textit{tr}$_{a}$).
\end{theorem}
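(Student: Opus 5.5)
Write $q = \textit{sup}(\textit{qr}, \mathcal{D})$, a fixed positive constant, since every target rule includes \textit{qr} and so \textit{qr} occurs in at least one sequence. Then \textit{TROS}(\textit{qr}, \textit{tr}) $= \phi(\textit{sup}(\textit{tr}, \mathcal{D}))$ with $\phi(x) = 2x/(q+x)$ for $x > 0$. The plan has two steps: show that $\phi$ is nondecreasing, and show that the support of the larger rule is not larger, exactly as in the proof of the preceding TRJS theorem.

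For the first step, I would avoid derivatives and compare directly. For $0 < x \le y$, cross-multiplying gives
$$\frac{2x}{q+x} \le \frac{2y}{q+y} \iff x(q+y) \le y(q+x) \iff qx \le qy,$$
which holds because $q > 0$. Hence $\phi$ is monotone nondecreasing on $(0, \infty)$.

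For the second step, I need \textit{sup}(\textit{tr}$_b$, $\mathcal{D}$) $\le$ \textit{sup}(\textit{tr}$_a$, $\mathcal{D}$). Following the TRJS proof, I would read ``\textit{tr}$_b$ is larger than \textit{tr}$_a$'' in the sense used by the search: \textit{tr}$_b$ is obtained from \textit{tr}$_a$ by expansions, so \textit{tr}$_b$ includes \textit{tr}$_a$, meaning $X_a \subseteq X_b$ and $Y_a \subseteq Y_b$. Then any sequence $s$ containing \textit{tr}$_b$ has indices $u < v$ with $X_b \subseteq \bigcup_{i \le u} s(i)$ and $Y_b \subseteq \bigcup_{i \ge v} s(i)$. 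The same indices witness $X_a \subseteq \bigcup_{i \le u} s(i)$ and $Y_a \subseteq \bigcup_{i \ge v} s(i)$, so $s$ contains \textit{tr}$_a$. This gives $\mathcal{D}_{\textit{tr}_b} \subseteq \mathcal{D}_{\textit{tr}_a}$, and taking cardinalities yields the support inequality.

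Combining the two steps gives \textit{TROS}(\textit{qr}, \textit{tr}$_b$) $= \phi(\textit{sup}(\textit{tr}_b)) \le \phi(\textit{sup}(\textit{tr}_a)) =$ \textit{TROS}(\textit{qr}, \textit{tr}$_a$).

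The main obstacle is the second step. The bare size-based definition of ``larger'' does not by itself imply anti-monotone support; for example, two unrelated rules of sizes 1*1 and 1*2 can be in either support order. So I would explicitly make the inclusion hypothesis, which is the situation that arises along expansion branches and that the subsequent pruning remark relies on, and then apply the containment argument above.
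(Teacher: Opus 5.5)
Your proof is correct and follows the paper's route: the paper likewise asserts $\textit{sup}(\textit{tr}_b,\mathcal{D}) \le \textit{sup}(\textit{tr}_a,\mathcal{D})$ and then cross-multiplies the two TROS fractions, which is your monotonicity of $\phi$. Your explicit inclusion hypothesis and containment argument supply what the paper's one-line support claim tacitly assumes, since, as you correctly observe, the size-only definition of ``larger'' does not by itself give anti-monotone support.
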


\begin{proof}
    \rm If \textit{tr}$_{b}$ is larger than \textit{tr}$_{a}$, then we can have that \textit{sup}(\textit{tr}$_{b}$, $\mathcal{D}$) $\le$ \textit{sup}(\textit{tr}$_{a}$, $\mathcal{D}$) and \textit{sup}(\textit{tr}$_{b}$, $\mathcal{D}$) * (\textit{sup}(\textit{qr}, $\mathcal{D}$) + \textit{sup}(\textit{tr}$_{a}$, $\mathcal{D}$)) $\le$ \textit{sup}(\textit{tr}$_{a}$, $\mathcal{D}$) * (\textit{sup}(\textit{qr}, $\mathcal{D}$) + \textit{sup}(\textit{tr}$_{b}$, $\mathcal{D}$)). Theorem \ref{th:tros} holds.
\end{proof}

\begin{algorithm}[!ht]\small
	\caption{Judging useful sequential rules}
	\label{AL:ju tr}
	\LinesNumbered
	\KwIn{\textit{pr} = \textit{pX} $\rightarrow$ \textit{pY}: a potential target sequential rule; \textit{qrSup}: the support value of the query rule; \textit{minRS}: the minimum rule similarity metric threshold.} 
	\KwOut{Whether \textit{pr} is a useful rule or not.}
 
    set \textit{iSup} as the support value of \textit{pr}\;
    calculate TRJS / TROS of \textit{pr} based on \textit{iSup} and \textit{qrSup}\;
    \If{\rm{TRJS / TROS of} \textit{pr} $\textless$ \textit{minRS}}{
        return \textit{false};
    }
    \textbf{return} \textit{true}
\end{algorithm}

Algorithm \ref{AL:ju tr} judges whether a rule is a useful rule or not. It has three input parameters: \textit{pr}, \textit{qrSup}, and \textit{minRS}. The \textit{qrSup} is computed before mining the target sequential rules, i.e., after filtering irrelevant sequences and items. The judgment procedure first obtains the variable \textit{iSup}, which is the support value of \textit{pr} (line 1). Then, based on \textit{iSup} and \textit{qrSup}, the TRJS / TROS value of \textit{pr} can be easily computed (line 2). Finally, the judgment procedure is completed (lines 3-6).

\section{Experiments} \label{sec:experiments}

In this section, to fully evaluate the proposed techniques and methods, we performed extensive experiments on both real-world and synthetic datasets and provided an in-depth analysis and discussion.

\subsection{Experimental Setup and Datasets}

We applied the techniques and pruning strategies proposed in this paper to benchmark algorithms for different mining tasks and implemented the following algorithms for our experiments:

\begin{itemize}
    \item \textbf{FSSR}: An implementation of frequent target sequential rule search modified from RuleGrowth \cite{fournier2015mining}, incorporating the proposed techniques and pruning strategies.
    
    \item \textbf{FSUR}: This approach utilizes the proposed techniques and pruning strategies for targeted utility mining tasks. It is modified based on the US-Rule algorithm \cite{huang2023us}.
    
    \item \textbf{{FSUR}$_\textit{filter}$}: A simple variant only employs Technique \ref{te rsm}.
    
    \item \textbf{{FSUR}$_\textit{SEU}$}: Compared to FSUR, this approach utilizes the upper bound SEU to filter unpromising items.
    
    \item \textbf{{TRJS}$_i$} and \textbf{{TRJO}$_i$}: Two methods using the corresponding proposed similarity metrics TRJS and TRJO, respectively, where $i$ represents the minimum rule similarity metric threshold.
\end{itemize}

In the experiments, we used six datasets, four of which are real-world datasets. These datasets are generated primarily from text-related materials, such as books, novels, and sign language. Table \ref{dataset} presents the major characteristics of these datasets, including the dataset size, the number of items, the average length of sequences (AvgS), and the average size of itemsets (AvgIT). Detailed descriptions of the datasets can be found in Refs. \cite{gan2019survey,gan2020fast}. In addition, we randomly selected query rules that yield non-empty mining results for each experiment. To avoid particularly short runtimes for each algorithm, query rules with smaller sizes were selected.

All algorithms were implemented in Java. The experiments were conducted on a computer with a 64-bit Windows 11 operating system, equipped with a 12$^{\text{th}}$ Gen Intel(R) Core™ i7-12700F CPU and 16 GB RAM. The code and datasets are publicly available at \href{https://github.com/DSI-Lab1/FSUR}{https://github.com/DSI-Lab1/FSUR.}  

\begin{table}[h]
    \centering
    \caption{Dataset characteristics}
    \label{dataset}
    \renewcommand\arraystretch{1.3}
    \begin{tabular}{lccccc}
        \hline
        {\textbf{Dataset}} & {$\vert\mathcal{D}\vert$} & {$\vert\textit{I}\vert$} & {\textbf{AvgS}} & {\textbf{AvgIT}} & {\textbf{Type}} \\ \hline
           Bible & 36,369& 13,905& 21.64& 1.0& Book       \\
           Sign & 730 &  267& 51.99& 1.0 & Sign-language  \\
           Kosarak10k & 10,000 & 10,094& 8.14 & 1.0 & News \\
           Leviathan & 5,834& 9,025& 33.81 & 1.0& Novel\\
           Syn20K & 20,000& 7,442& 26.97& 4.33 &  Synthetic data\\
           Syn40K & 40,000& 7,537& 26.84 & 4.33 & Synthetic data\\ \hline
    \end{tabular}
\end{table}

\begin{table}[!ht] 
\fontsize{4.3pt}{10pt}\selectfont
    \centering
    \caption{The number of expansions under various \textit{minutil}}
    \renewcommand{\arraystretch}{1.5}
    \label{expansions}
    \begin{tabular}{|c|c|c|c|c|c|c|c|}
    \hline
    \multicolumn{1}{|c|}{\textbf{Dataset}} & \textbf{Algorithm} & \textit{minutil}$_{1}$ & \textit{minutil}$_{2}$ & $\textit{minutil}_{3}$ & $\textit{minutil}_{4}$ & $\textit{minutil}_{5}$ & $\textit{minutil}_{6}$\\
         \hline\hline
         \multirow{3}{*}{\textbf{Bible}}& FSUR$_\textit{filter}$ & 62,798,797
 & 22,995,858 & 9,006,470 & 4,318,620 & 2,779,546 & 2,240,172\\ 
         \cline{2-8} & FSUR$_\textit{SEU}$ & 3,391,806 & 1,269,186 & 487,726 & 212,718 & 121,433 & 91,655\\  
        \cline{2-8} & FSUR & 2,637,005 & 930,607 & 346,783 & 156,285 & 96,564 & 77,623\\ \hline
        
         \multirow{3}{*}{\textbf{Sign}}& FSUR$_\textit{filter}$ & 4,111,100
 & 3,021,123 & 2,268,195 & 1,767,893 & 1,365,701 & 1,080,415\\ 
         \cline{2-8} & FSUR$_\textit{SEU}$ & 1,276,604 & 937,297 & 703,953 & 549,589 & 426,441 & 337,893\\  
        \cline{2-8} & FSUR & 1,230,963 & 888,150 & 683,045 & 533,029 & 422,197 & 332,894\\ \hline 
        
         \multirow{3}{*}{\textbf{Kosarak10k}}& FSUR$_\textit{filter}$ & /
 & 67,026 & 38,603 & 32,155 & 12,714 & 3,403\\ 
         \cline{2-8} & FSUR$_\textit{SEU}$ & / & 2,922 & 2,500 & 2,070 & 879 & 262\\  
        \cline{2-8} & FSUR & 3,151 & 2,858 & 2,377 & 889 & 251 & 23\\ \hline 
        
         \multirow{3}{*}{\textbf{Leviathan}}& FSUR$_\textit{filter}$ & 14,031,353
 & 7,349,344 & 4,438,927 & 2,987,633 & 2,095,266 & 1,557,083\\ 
         \cline{2-8} & FSUR$_\textit{SEU}$ & 3,827,557 & 2,054,881 & 1,266,066 & 861,905 & 611,609 & 456,904\\  
        \cline{2-8} & FSUR & 3,695,937 & 1,991,585 & 1,213,004 & 821,673 & 590,463 & 438,149\\ \hline 
        
         \multirow{3}{*}{\textbf{Syn20K}}& FSUR$_\textit{filter}$ & /
 & 118,164,086 & 97,369,042 & 62,310,552 & 28,168,055 & 9,440,853 \\ 
         \cline{2-8} & FSUR$_\textit{SEU}$ & 61,109,801 & 26,462,866 & 23,202,752 & 16,049,988 & 7,585,373 & 2,387,616\\  
        \cline{2-8} & FSUR & 60,336,766 & 26,440,154 & 23,131,286 & 15,938,227 & 7,585,373 & 2,387,616 \\ \hline 
        
         \multirow{3}{*}{\textbf{Syn40K}}& FSUR$_\textit{filter}$ & 11,346,960
 & 3,053,393 & 270,423 & 97,519 & 84,266 & 25,441\\ 
         \cline{2-8} & FSUR$_\textit{SEU}$ & 3,140,237 & 751,479 & 219,124 & 71,617 & 22,197 & 6,674\\  
        \cline{2-8} & FSUR & 3,140,237 & 751,479 & 219,124 & 71,617 & 22,197 & 6,674 \\ \hline 
         \hline
    \end{tabular}
\end{table}

\begin{figure*}[!h]
	\centering
    \includegraphics[trim=0 0 0 0,clip,width=18.2cm, height=7.3cm]{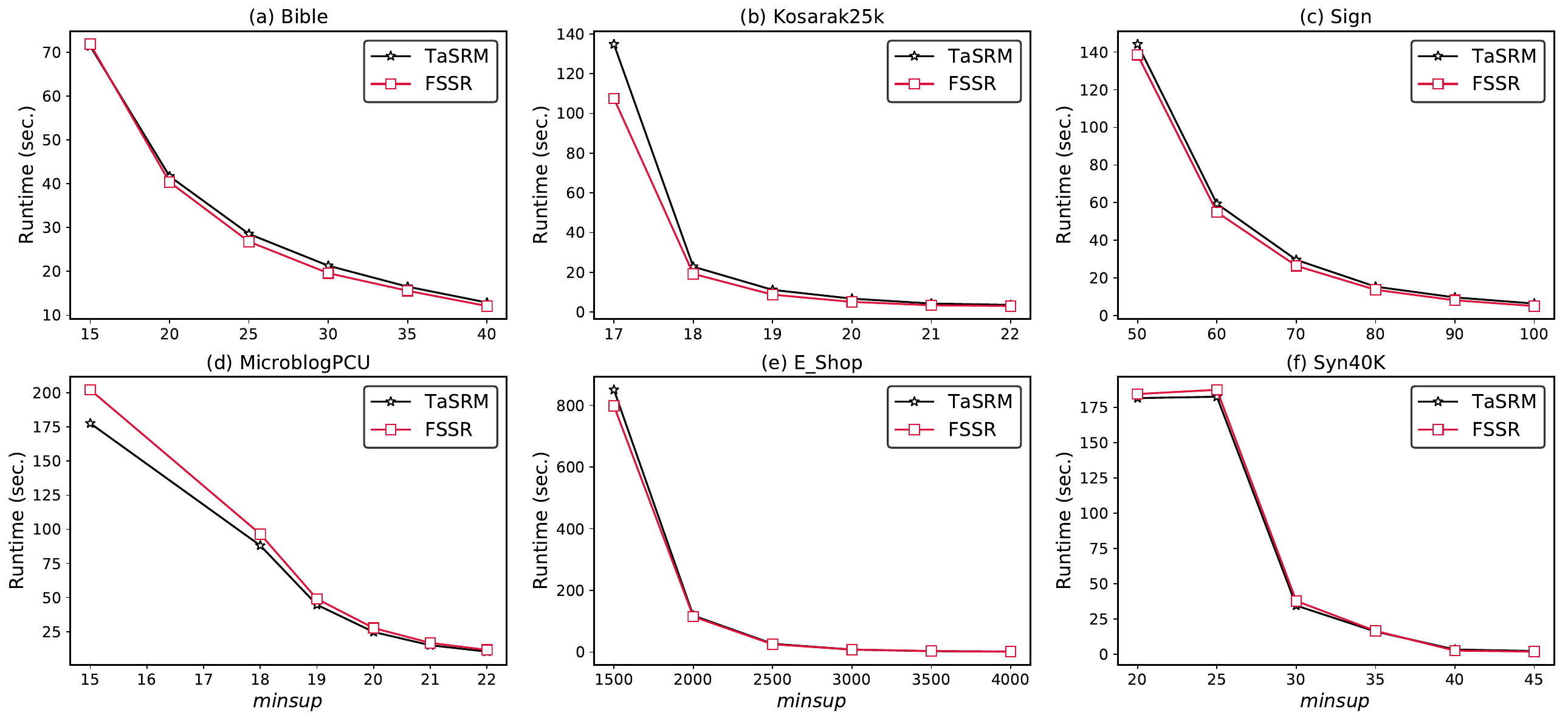}
    \caption{Runtime under different FPM tasks.}
    \label{mRuntime}
\end{figure*}

\begin{figure*}[!h]
	\centering
	\includegraphics[trim=0 0 0 0,clip,width=18.2cm, height=7.3cm]{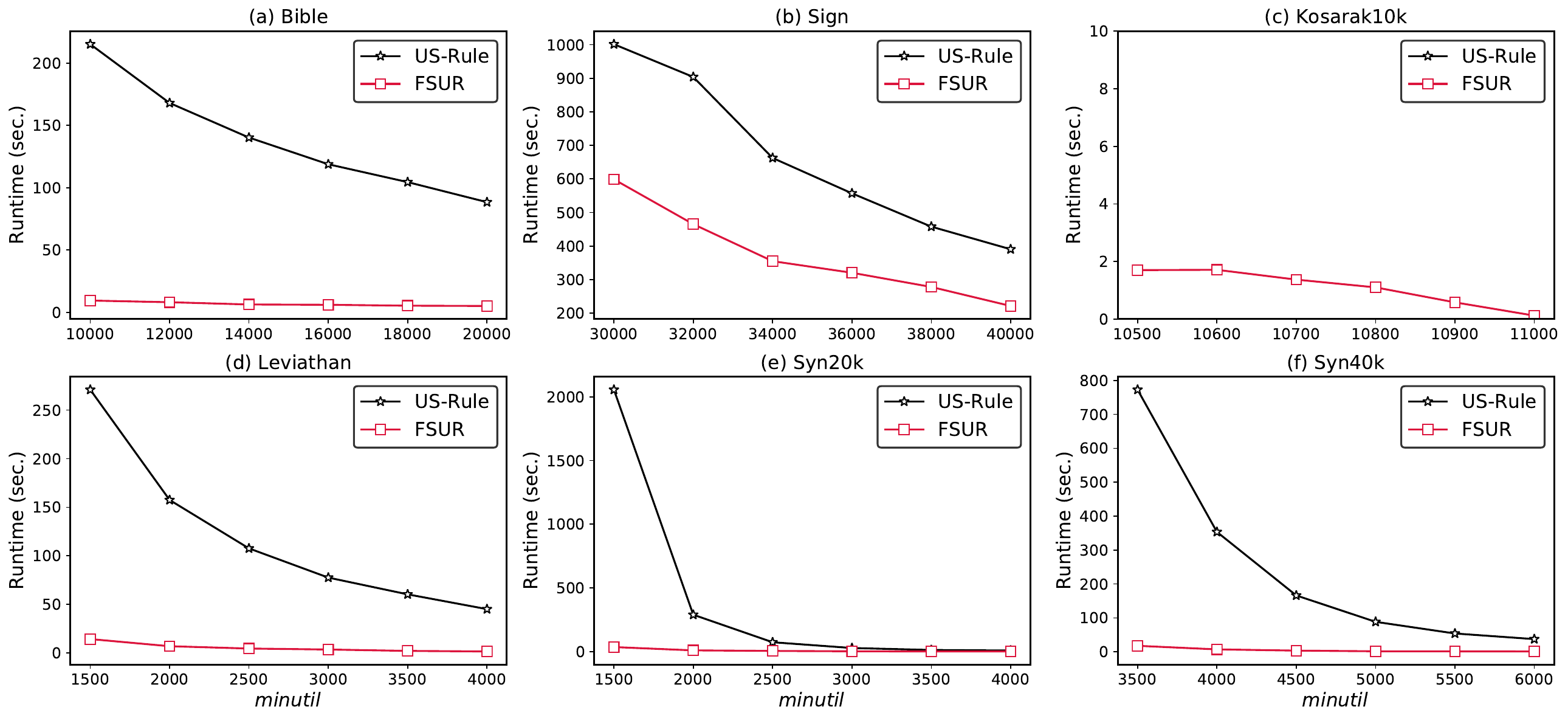}
    \caption{Runtime under different UPM tasks.}
    \label{uRuntime}
\end{figure*}

\subsection{Efficiency Analysis}

We first conducted experiments to analyze the efficiency of the proposed comprehensive search solution (which integrates all designed techniques and pruning strategies) across different sequence mining tasks. The runtime performance for FPM and UPM tasks is detailed in Fig. \ref{mRuntime} and Fig. \ref{uRuntime}, respectively. The results for FPM tasks, as shown in Fig. \ref{mRuntime}(a-f), demonstrate that our proposed algorithm FSSR can achieve efficiency comparable to TaSRM \cite{gan2025towards} (the state-of-the-art targeted sequential rule mining algorithm). For instance, on the Bible dataset (Fig. \ref{mRuntime}(a)), as the minimum support threshold (\textit{minsup}) increases from 15 to 40, the runtime of both algorithms decreases steadily from approximately 70 seconds to about 10 seconds. The performance curves of FSSR and TaSRM are nearly identical across all tested datasets (Bible, Kosarak25k, Sign, MicroblogPCU, E-shop, and Syn40K), indicating that FSSR can successfully match the efficiency of a highly optimized baseline. This is noteworthy because TaSRM is specifically designed for FPM. The comparable performance of FSSR confirms that our generalized search framework, when applied to the FPM task, incurs minimal overhead and retains high effectiveness.

For the more complex UPM tasks, the advantage of our proposed solution becomes significantly more pronounced. As shown in Fig. \ref{uRuntime}(a-f), our algorithm FSUR consistently and substantially outperforms the recent baseline US-Rule across all datasets and parameter settings. On smaller datasets like Sign (Fig. \ref{uRuntime}(b)), FSUR completes the mining task in roughly half the time required by US-Rule. This efficiency gap widens dramatically on larger or denser datasets. Besides, on the Kosarak10k dataset (Fig. \ref{uRuntime}(c)), US-Rule fails to finish tasks within a reasonable runtime for most \textit{minutil} values, as indicated by its missing or exceedingly high data points on the figures. On the contrary, FSUR maintains a stable runtime throughout the parameter range. This demonstrates FSUR's robustness and practical applicability where the baseline fails.

In summary, the experimental results clearly show that our proposed search solution is both effective and highly efficient. It achieves parity with the state-of-the-art in FPM tasks and possesses superior—often drastically better—performance in UPM tasks. This efficiency stems fundamentally from the design of our pruning strategies, which successfully constrain the explosive search space inherent in sequence mining, resulting in faster execution and enhanced scalability.

\begin{figure*}[!ht]
    \centering
    \includegraphics[trim=0 0 0 0,clip,width=18.2cm, height=7.3cm]{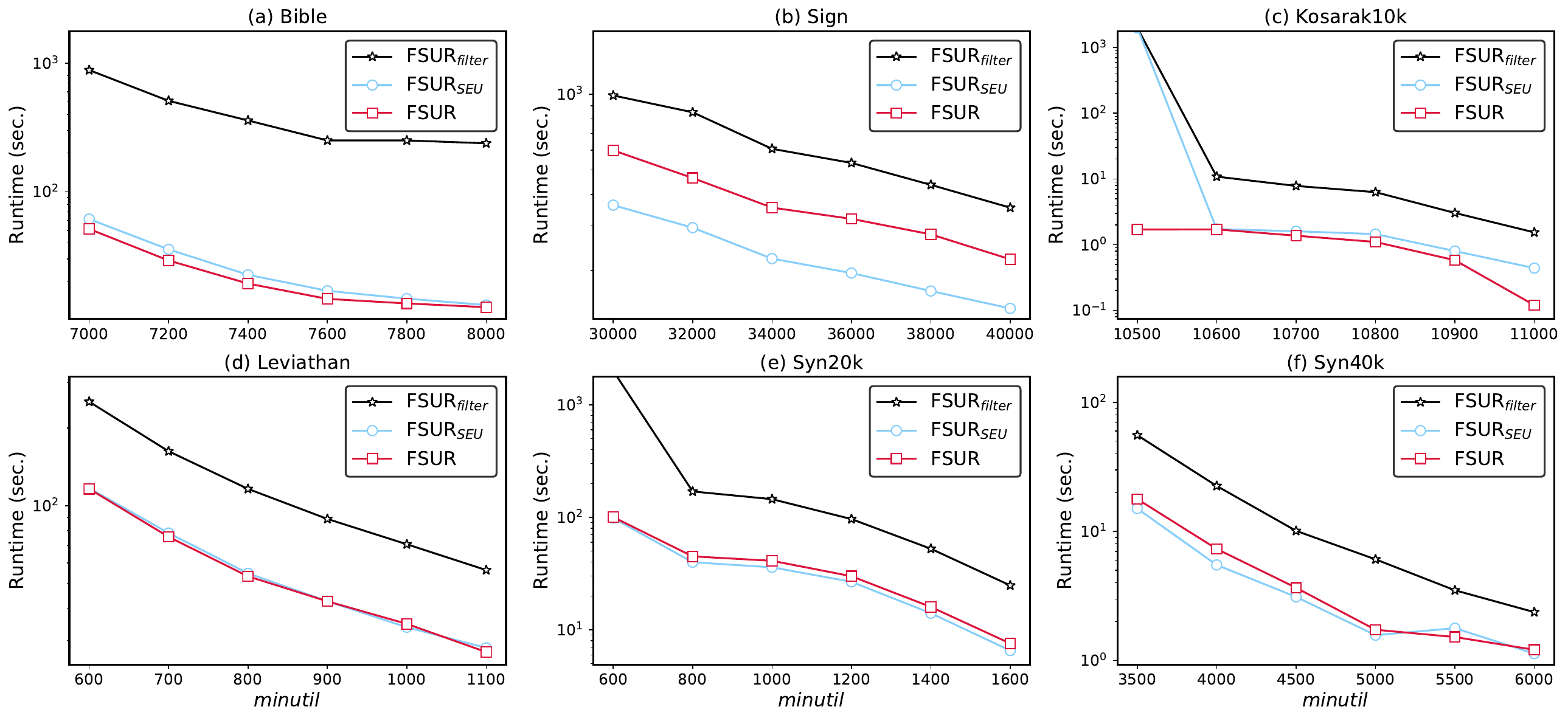}
    \caption{Runtime for different algorithm variants.}
     \label{sRuntime}
\end{figure*}

\subsection{Technique Analysis}

In this subsection, considering that the utility-driven mining process is more challenging, we further analyzed the main techniques proposed under utility mining tasks. We compared the performances of FSUR and its two variants (including {FSUR}$_\textit{filter}$ and {FSUR}$_\textit{SEU}$). The runtime for different algorithm variants is reported in Fig. \ref{sRuntime} and the number of expansions is shown in Table \ref{expansions} (\textit{minutil}$_{i}$ denotes the $i$-th smallest minimum utility threshold setting under this task). A comparison of {FSUR}$_\textit{filter}$ and FSUR can demonstrate the effect of Technique \ref{te smfqr}, since the use of Technique \ref{te rsm} allows both algorithms to process the same data size. As for the comparison between {FSUR}$_\textit{SEU}$ and FSUR, it is to analyze the specific performance of the basic upper bound and the tighter upper bounds.

From the experimental results, we know that FSUR always outperforms {FSUR}$_\textit{filter}$ by about an order of magnitude in terms of runtime. In general, on the Bible dataset, the number of expansions of FSUR is at least 20 times less than the number of expansions of {FSUR}$_\textit{filter}$. On the other datasets, it is typically about 4 times less. This demonstrates that Technique \ref{te smfqr} can reduce runtime by avoiding a large number of pointless expansions at a low cost. In addition, the experiments also reveal some situations regarding the basic upper bound and the tighter upper bounds. Except on the datasets Sign and Kosarak10k (\textit{minutil} set to 10,500), the difference in runtime consumed by FSUR and {FSUR}$_\textit{SEU}$ is not significant. The use of tighter upper bounds leads to a reduction in the number of expansions during the mining process, but this improvement is not considerable on the chosen datasets. In particular, on the synthetic datasets, we can see that the tighter upper bounds generally do not reduce the number of expansions, and they require extra time for their computation. Since the Bible is a dataset with more items and is less dense, the number of expansions of {FSUR} is much less compared to {FSUR}$_\textit{SEU}$. Note that the tighter upper bounds achieve the largest negative effect on the Sign dataset compared to the other datasets. This is because the dense Sign dataset contains an exceptionally small number of distinct items, and the sequences in Sign are all very long. Therefore, the utilization of tighter upper bounds is even less cost-effective.

\subsection{Effect of Target Rule Similarity Metric}

Finally, we analyze the effect of the two similarity metrics proposed in this paper on mining results and efficiency. Three minimum similarity metric threshold values for the metrics TRJS and TROS are selected: 0.001, 0.01, and 0.05. Runtimes and the number of target rules obtained for different algorithm variants on metrics TRJS and TROS are reported in Fig. \ref{rsRuntime} and Fig. \ref{oRuntime}, respectively. Experiments show that the proposed metrics can help users to flexibly adjust mining tasks. From the results, we know that the higher the minimum similarity metric threshold is set, the less time the algorithm consumes, and the fewer target rules are obtained. The number of obtained rules does not increase much with decreasing \textit{minutil}, even if the TRJS metric is set to a very low value. The use of TRJS stabilizes the runtime of each algorithm. The target rules obtained by {TRJS}$_{0.01}$ and {TRJS}$_{0.001}$ are within 100 and 200 rules, respectively. However, when \textit{minutil} is reduced from 7,000 to 6,800, the number of target rules obtained by FUSR increases substantially. This situation makes it difficult to ensure which rules are worth prioritizing for users to analyze. For the metric TROS, when the parameter \textit{minutil} is set to 7000 or more, all algorithms obtain around 200 target rules. In the experiments, FUSR and {TROS}$_{0.001}$ generate the same number of target rules, but {TROS}$_{0.001}$  filters out some irrelevant expansions with the help of the similarity metric. Compared to TRJS, the use of TROS does not cause the algorithm to discover only a few rules when the minimum similarity metric threshold is set to 0.05 or smaller values.

\begin{figure}[h]
    \centering
    \includegraphics[trim=0 0 0 0,clip,scale=0.3]{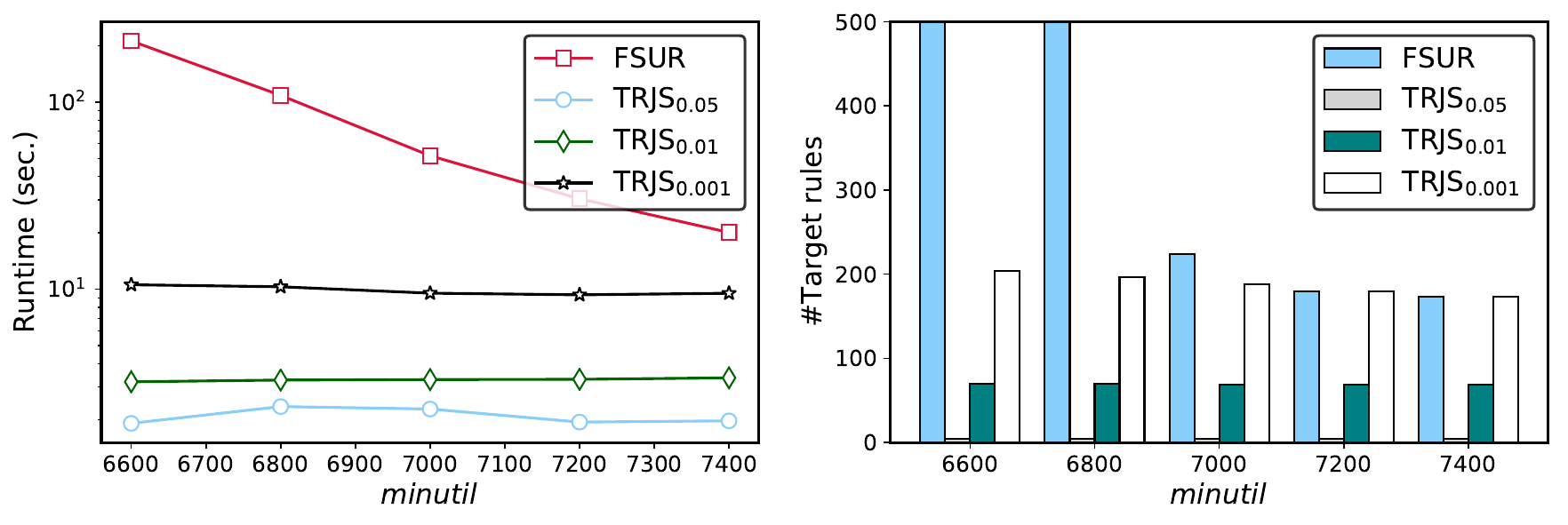}
    \caption{Runtime for different algorithm variants.}
    \label{rsRuntime}
\end{figure}

\begin{figure}[h]
    \centering
    \includegraphics[trim=0 0 0 0,clip,scale=0.3]{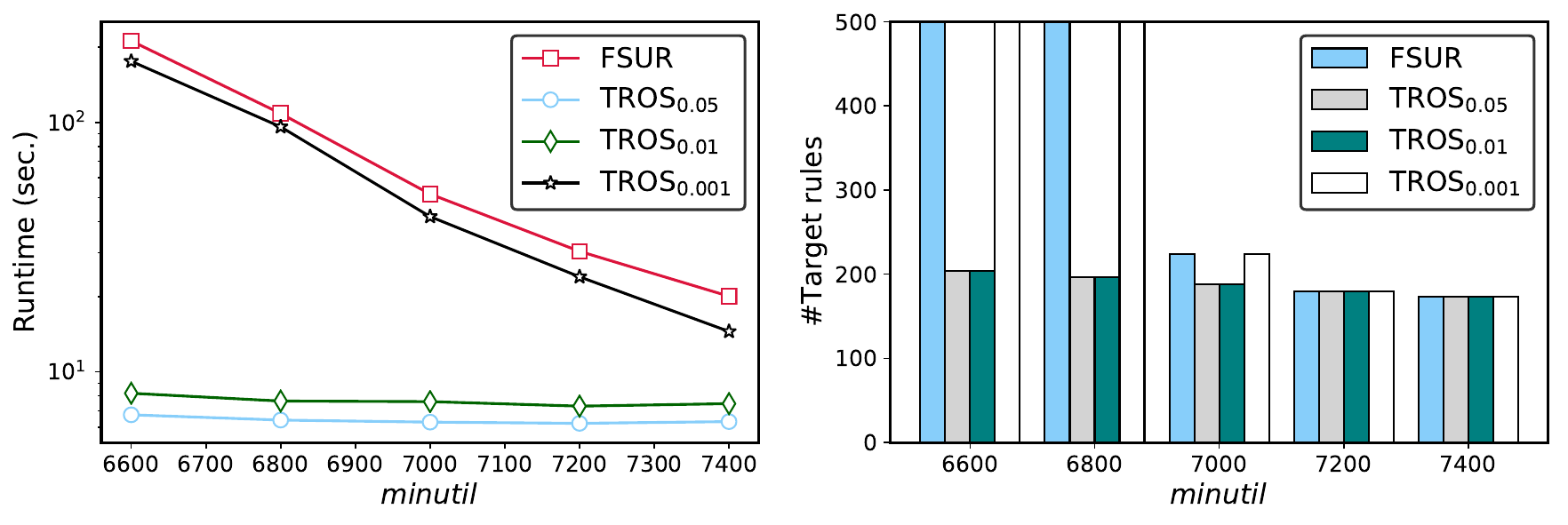}
    \caption{Runtime for different algorithm variants.}
    \label{oRuntime}
\end{figure}

\section{Conclusion}  \label{sec:conclusion}

In this paper, we study the problem of targeted mining of sequential rules under different mining metrics. We observe the characteristics of a query rule in a dataset and propose feasible techniques to process the dataset so that the size of the mined dataset can be reduced. Considering that a database has many unpromising items that cannot be used to construct target sequential rules, we introduce the basic upper bound and the derived tighter upper bounds to remove them. Furthermore, we propose two similarity metrics to discover the most relevant target rules based on the query rule. The experiments fully evaluate the techniques, pruning strategies, and the target rule similarity metrics proposed in this paper. There remains much interesting work to explore. It is still meaningful to consider how to develop complex targeted query settings and new evaluation metrics for sequential rules to further improve the quality of mining results. As there are currently few studies on targeted sequential rule discovery, some of the challenging practical scenarios include online rule search, privacy-preserving rule mining, and obtaining target sequential rules from uncertain data. Addressing these open problems will require continuous effort and innovation from the research community. Finally, we hope that the insights provided in this work can serve as a solid foundation and inspire more studies to delve into this promising field.

% references section
\bibliographystyle{IEEEtran}
\bibliography{FSUR}

\end{document}